\documentclass[12pt]{article}
\usepackage[margin=1in]{geometry}
\usepackage[english]{babel}
\usepackage[utf8]{inputenc}
\usepackage[T1]{fontenc}
\usepackage{graphicx}
\graphicspath{{figures/}}
\usepackage{amssymb,amsmath,amsthm}
\usepackage{stmaryrd}
\usepackage{hyperref}
\usepackage{enumitem}
\setitemize{noitemsep,topsep=2pt,parsep=0pt,partopsep=0pt}
\setenumerate{noitemsep,topsep=2pt,parsep=0pt,partopsep=0pt}
\usepackage[table]{xcolor}
\usepackage{authblk}
\usepackage{tikz}
\usepackage{multicol,multirow}

\newcommand\tr{0.85}

\newtheorem{theorem}{Theorem}
\newtheorem{definition}[theorem]{Definition}

\newtheorem{lemma}[theorem]{Lemma}
\newtheorem{corollary}[theorem]{Corollary}

\newtheorem{remark}[theorem]{Remark}


\newcommand{\1}{{\normalfont \texttt{1}}}
\newcommand{\0}{{\normalfont \texttt{0}}}

\renewcommand{\O}{\mathcal{O}}
\newcommand{\Poly}{\mathsf{P}}
\newcommand{\NP}{\mathsf{NP}}
\newcommand{\SPoly}{\mathsf{\#P}}


\newcommand{\ID}[1]{G_{#1}}
\newcommand{\lab}{\mathrm{lab}}
\newcommand{\labminus}{\ominus}
\newcommand{\labplus}{\oplus}
\newcommand{\UD}[1]{\texttt{UD}(#1)}
\newcommand{\SUD}[1]{\#\UD{#1}}
\newcommand{\UDto}[4]{\UD{#1}_{#2\to#3}^{#4}}
\newcommand{\UDp}[3]{\UD{#1}_{#2}^{#3}}
\newcommand{\SUDp}[3]{\#\UDp{#1}{#2}{#3}}
\newcommand{\seqcomp}[2]{\mathcal{S}(#1,#2)}
\newcommand{\parcomp}[2]{\mathcal{P}(#1,#2)}
\newcommand{\freecomp}[2]{\mathcal{F}(#1,#2)}
\newcommand{\un}[1]{\bar{#1}}
\newcommand{\revneg}[1]{\tilde{#1}}

\newcommand{\decisionpb}[3]{\fbox{\parbox{0.9\textwidth}{{\bf #1}\\{\it Input:} #2\\{\it Question:} #3}}}

\newcommand{\countingpb}[3]{\fbox{\parbox{0.9\textwidth}{{\bf #1}\\{\it Input:} #2\\{\it Output:} #3}}}


\begin{document}

\title{\#P-completeness of counting update digraphs,\\
cacti, and a series-parallel decomposition method}
\author[1]{Camille No{\^u}s}
\author[1]{K{\'e}vin Perrot}
\author[1]{Sylvain Sen{\'e}}
\author[1]{Lucas Venturini}
\affil[1]{Universit{\'e} publique, France}
\date{}
\maketitle

\begin{abstract}
  Automata networks are a very general model of interacting entities, with
  applications to biological phenomena such as gene regulation. In many
  contexts, the order in which entities update their state is unknown,
  and the dynamics may be very sensitive to changes in this schedule of
  updates. Since the works of Aracena et. al, it is known that update
  digraphs are pertinent objects to study non-equivalent block-sequential
  update schedules. We prove that counting the number of equivalence classes,
  that is a tight upper bound on the synchronism sensitivity of a given
  network, is $\SPoly$-complete.
  The problem is nevertheless computable in quasi-quadratic time for oriented 
  cacti, and for oriented series-parallel graphs thanks to a decomposition method.
\end{abstract}

\section{Introduction}

Since their introduction by McCulloch and Pitts in the 1940s through
the well known formal neural networks~\cite{mcp43},
automata networks (ANs) are a general model of interacting entities
in finite state spaces. The field has important contributions to
computer science, with Kleene's finite state automata~\cite{k51},
linear shift registers~\cite{h59}
and linear networks~\cite{e59}.
At the end of the 1960s, Kauffman and Thomas (independently)
developed the use of ANs for the modeling of biological
phenomena such as gene regulation~\cite{k69,t73},
providing a fruitful theoretical framework~\cite{r69}.

ANs can be considered as a collection of local functions (one per component),
and influences among components may be represented as a so called {\em interaction digraph}.
In many applications the order of components update is a priori unknown,
and different schedules may greatly impact the dynamical properties of the system.
It is known since the works of Aracena {\em et al.} in~\cite{agms09} that {\em update digraphs}
(consisting of labeling the arcs of the interaction digraphs with $\labplus$ and $\labminus$)
capture the correct notion to consider a biologically meaningful family of update
schedules called {\em block-sequential} in the literature. Since another
work of Aracena {\em et al.}~\cite{afmn11} a precise characterization of the valid labelings
is known, but their combinatorics remained puzzling.
After formal definitions and known results in Sections~\ref{s:def} and~\ref{s:known},
we propose in Section~\ref{s:SPolyc} an explanation for this difficulty,
through the lens of computational complexity theory:
we prove that counting the number of update digraphs
(valid $\{\labplus,\labminus\}$-labelings) is $\SPoly$-complete.
In Section~\ref{s:cacti} we consider the problem restricted to 
the family of oriented cactus graphs and give a $\O(n^2 \log n \log\log n)$ time algorithm,
and finally in Section~\ref{s:decomp} we present a decomposition method leading to
a $\O(n^2 \log^2 n \log\log n)$ algorithm for oriented series-parallel graphs.

\section{Definitions}
\label{s:def}

Given a finite alphabet $[q]=\{1,\dots,q\}$, an {\em automata network} (AN) of
size $n$ is a function $f: [q]^n \to [q]^n$. We denote $x_i$ the {\em
component} $i \in [n]$ of some {\em configuration} $x \in [q]^n$. ANs are more
conveniently seen as $n$ {\em local functions} $f_i: [q]^n \to [q]$ describing
the update of each component, {\em i.e.} with $f_i(x)=f(x)_i$. The {\em
interaction digraph} captures the effective dependencies among components, and
is defined as the digraph $\ID{f}=([n],A_f)$ with 
$$
  (i,j) \in A_f \quad\iff\quad f_j(x) \neq f_j(y) \text{ for some } x,y \in [q]^n
  \text{ with } x_{i'}=y_{i'} \text{ for all } i' \neq i.
$$

It is well known that the schedule of components update may have a great impact
on the dynamics~\cite{ags13,bgps20,f14,gmmrw19,ns18,pmmor19}. 
A {\em block-sequential update schedule} $B=(B_1,\dots,B_t)$ is an ordered
partition of $[n]$, defining the following dynamics
$$
  f^{(B)}=f^{(B_t)} \circ \dots \circ f^{(B_2)} \circ f^{(B_1)}
  \quad\text{with}\quad
  f^{(B_i)}(x)_j=\begin{cases}
    f_j(x) & \text{if } j \in B_i\\
    x_j & \text{if } j \notin B_i
  \end{cases}
$$
{\em i.e.}, parts are updated sequentially one after the other, and components within a part are
updated in parallel. For the parallel update schedule
$B^\texttt{par}=([n])$, we have $f^{(B^\texttt{par})}=f$. Block-sequential
update schedules are a classical family of update schedules considered in the
literature, because they are perfectly fair: every local function is applied
exactly once during each step.  Equipped with an update schedule, $f^{(B)}$ is
a discrete dynamical system on $[q]^n$. In the following we will shortly say
{\em update schedule} to mean {\em block-sequential update schedule}.

\begin{figure}
  \centerline{\includegraphics{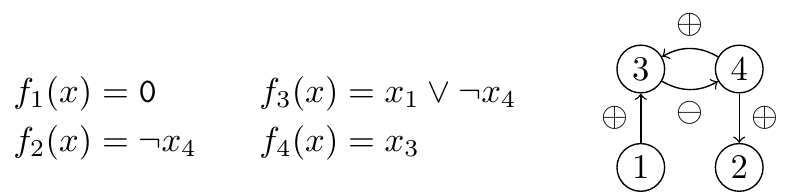}}
  \caption{
    Example of an AN $f$ on the Boolean alphabet $[q]=\{1,2\}$ (conventionally
    renamed $\{\0,\1\}$), its interaction digraph, and a
    $\{\labplus,\labminus\}$-labeling $\lab_B=\lab_{B'}$ corresponding to the
    two equivalent update schedules $B=(\{1,2,3\},\{4\})$ and
    $B'=(\{1,3\},\{2,4\})$.
  }
  \label{fig:ID-UD}
\end{figure}

It turns out quite intuitively that some update schedules will lead to the same
dynamics, when the ordered partitions are very close and the difference relies
on components far apart in the interaction digraph (see an example on
Figure~\ref{fig:ID-UD}). Aracena {\em et al.} introduced in~\cite{agms09}
the notion of {\em update digraph} to capture this fact. To an update schedule
one can associate its {\em update digraph}, which is a
$\{\labplus,\labminus\}$-labeling of the arcs of the interaction digraph of the
AN, such that $(i,j)$ is negative ($\labminus$) when $i$ is updated strictly
before $j$, and positive ($\labplus$) otherwise.
Formally, given an update schedule $B=(B_1,\dots,B_t)$,
$$
  \forall (i,j) \in A_f : \lab_B((i,j))=\begin{cases}
    \labplus & \text{if } i \in B_{t_i} \text{ and } j \in B_{t_j}
    \text{ with } t_i \geq t_j,\\
    \labminus & \text{if } i \in B_{t_i} \text{ and } j \in B_{t_j}
    \text{ with } t_i < t_j.
  \end{cases}
$$

\begin{remark}
  \label{remark:loop}
  Loops are always labeled $\labplus$,
  hence we consider all interaction digraphs {\em loopless}.
\end{remark}
  
The following result has been established: given two update schedules, if the
relative order of updates among all adjacent components are identical, then the
dynamics are identical.

\begin{theorem}[\cite{agms09}]\label{theorem:lab}
  Given an AN $f$ and two update schedules $B,B'$,
  if $\lab_B=\lab_{B'}$ then $f^{(B)}=f^{(B')}$.
\end{theorem}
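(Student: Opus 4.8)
The plan is to show that the configuration $f^{(B)}(x)$ is completely determined by $x$ together with the labeling $\lab_B$, through a system of local fixed-point equations that makes no reference to the schedule $B$ beyond its labeling. Once this is established, two schedules $B,B'$ with $\lab_B=\lab_{B'}$ force $f^{(B)}(x)$ and $f^{(B')}(x)$ to be solutions of the very same system for every $x$; provided that system admits a \emph{unique} solution, the two images coincide on all configurations and hence $f^{(B)}=f^{(B')}$.

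First I would unroll the block-sequential dynamics to pin down the value each component reads off its in-neighbours. Writing $B=(B_1,\dots,B_t)$ and denoting by $x^{(k)}$ the configuration obtained after applying $B_1,\dots,B_k$, the defining recursion gives $x^{(k)}_j=f_j(x^{(k-1)})$ for $j\in B_k$, and crucially each component is updated exactly once (the ``perfectly fair'' property), so its value is frozen afterwards. Fixing $j$ with $j\in B_{t_j}$, its final value is $f^{(B)}(x)_j=f_j(x^{(t_j-1)})$, and since $f_j$ depends only on the in-neighbours of $j$ in $\ID{f}$ I only need the values $x^{(t_j-1)}_i$ for arcs $(i,j)\in A_f$. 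Here the labeling enters exactly: if $t_i<t_j$ the arc is $\labminus$ and, $i$ having already been updated and frozen, $x^{(t_j-1)}_i=f^{(B)}(x)_i$; if $t_i\geq t_j$ the arc is $\labplus$ and, $i$ not yet updated at step $t_j-1$, $x^{(t_j-1)}_i=x_i$ (the same-block case $t_i=t_j$ being consistent, as parallel updates within a block read the incoming values). This is the step I expect to require the most care, since one must argue cleanly that ``updated before'' coincides with ``reads the final value'' and ``updated at the same time or later'' with ``reads the initial value''. The outcome is the schedule-free characterization, in which only the in-neighbours of $j$ matter:
\begin{equation*}
  f^{(B)}(x)_j=f_j(z^{(j)}),\qquad
  z^{(j)}_i=\begin{cases}
    f^{(B)}(x)_i & \text{if } (i,j)\in A_f \text{ and } \lab_B((i,j))=\labminus,\\
    x_i & \text{otherwise,}
  \end{cases}
\end{equation*}
which depends on $B$ only through $\lab_B$.

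It then remains to see that this system has a unique solution. The key structural fact is that the arcs labeled $\labminus$ never form a cycle: by definition $(i,j)$ is $\labminus$ only when $t_i<t_j$, so following $\labminus$-arcs strictly increases the update time and no directed cycle can close up. Consequently the $\labminus$-subgraph is acyclic, and I would solve the system by induction along a topological order of it: a component with no incoming $\labminus$-arc has $f^{(B)}(x)_j=f_j$ evaluated on initial values alone, hence is determined, and each later component is fixed by the already-determined values of its $\labminus$-predecessors together with the initial values on its $\labplus$-arcs. This yields existence and uniqueness of the solution purely as a function of $x$ and the labeling. Finally, applying the characterization to both $B$ and $B'$ and using $\lab_B=\lab_{B'}$, both $f^{(B)}(x)$ and $f^{(B')}(x)$ solve the identical system, so by uniqueness they agree for every $x$, giving $f^{(B)}=f^{(B')}$.
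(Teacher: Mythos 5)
Your proof is correct: deriving the schedule-free fixed-point characterization $f^{(B)}(x)_j=f_j(z^{(j)})$ (where $z^{(j)}$ reads final values across $\labminus$-arcs and initial values across $\labplus$-arcs), and then obtaining uniqueness by induction along a topological order of the necessarily acyclic $\labminus$-subgraph, is exactly the argument underlying this result. Note that the paper states this theorem without proof, importing it from~\cite{agms09}; your reconstruction matches the original approach of Aracena \emph{et al.}, including the careful handling of the same-block case $t_i=t_j$, so there is nothing to flag.
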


This leads naturally to an {\em equivalence relation} on update schedules, at
the heart of the present work.

\begin{definition}
  $B \equiv B'$ if and only if $\lab_B=\lab_{B'}$.
\end{definition}

It is very important to note that, though every update schedule corresponds to a
$\{\labplus,\labminus\}$-labeling of $\ID{f}$,
the reciprocal of this fact is not true.
For example, a cycle with all arcs labeled $\labminus$ would lead to a
contradiction where components are updated strictly before themselves.
Aracena {\em et al.} gave a precise
characterization of {\em valid} update digraphs ({\em i.e.} the ones
corresponding to at least one update schedule).

\begin{theorem}[\cite{afmn11}]\label{theorem:lab_valid}
  A labeling function $\lab: A \to \{ \labplus,\labminus \}$ is valid if
  and only if there is no cycle $(i_0,i_1,\dots,i_k)$, with $i_0=i_k$, of
  length $k>0$ such that both:
  \begin{itemize}
    \item $\forall~ 0 \leq j < k: ((i_j,i_{j+1}) \in A \wedge
      \lab((i_j,i_{j+1}))=\labplus) \vee ((i_{j+1},i_j) \in A \wedge
      \lab((i_{j+1},i_j))=\labminus)$,
    \item $\exists~ 0 \leq j < k: \lab((i_{j+1},i_j))=\labminus$.
  \end{itemize}
  In words, the multidigraph where the labeling is unchanged but the
  orientation of negative arcs is reversed, does not contain a cycle with at
  least one negative arc {\em (forbidden cycle)}.
\end{theorem}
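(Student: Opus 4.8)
The plan is to recast validity as the feasibility of a system of difference constraints on an integer ``potential'' and then to read off the forbidden-cycle condition as the exact obstruction to feasibility. First I would unfold the definition of validity: a labeling $\lab$ is valid iff there is an update schedule $B=(B_1,\dots,B_t)$ with $\lab_B=\lab$. Writing $t_i$ for the index of the block containing $i$, the defining equations of $\lab_B$ say exactly that $\lab_B((i,j))=\labplus \iff t_i \geq t_j$. Hence $\lab$ is valid iff there exists a map $t:[n]\to\Z_{>0}$ with
$$
  \lab((i,j))=\labplus \implies t_i \geq t_j
  \qquad\text{and}\qquad
  \lab((i,j))=\labminus \implies t_i < t_j
$$
for every $(i,j)\in A$. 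Conversely any such $t$ yields a schedule after compressing its image to an initial segment of $\Z_{>0}$, which preserves all the inequalities, so I will have reduced validity to the existence of such a potential.

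Next I would pass to the digraph $H$ obtained by reversing every negative arc while keeping its label, exactly as in the theorem's ``in words'' reformulation. Along a positive $H$-arc $(i,j)$ the constraint reads $t_i\geq t_j$, while along a reversed negative $H$-arc $(j,i)$ the constraint $t_i<t_j$ reads $t_j>t_i$; thus along every $H$-arc the potential does not increase, and it strictly decreases precisely across negative arcs. The first itemized condition is then nothing but the assertion that $i_0\to i_1\to\dots\to i_k$ is a directed closed walk in $H$, and the second that it uses at least one negative arc, so a forbidden cycle is exactly a directed cycle of $H$ through a negative arc. For the forward implication I would argue by contraposition: if such a cycle existed together with a feasible $t$, then traversing it the per-step differences are all $\geq 0$ with at least one strictly positive, yet they sum to $t_{i_0}-t_{i_k}=0$, a contradiction. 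This direction is immediate.

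For the converse I would construct $t$ explicitly from the assumption that $H$ has no forbidden cycle. The key structural observation is that every strongly connected component of $H$ induces only positive arcs: a negative arc inside a component would, by strong connectivity, close into a directed cycle through a negative arc, i.e.\ a forbidden cycle. Hence all negative arcs run between distinct components, and the condensation of $H$ is a directed acyclic graph; I weight each of its arcs $1$ if it comes from a negative arc and $0$ if positive. I then set $t_i$ to be the maximum weight of a directed path in the condensation starting at the component of $i$, which is well defined because the condensation is acyclic. Checking the constraints arc by arc, a positive arc either stays inside a component (giving $t_i=t_j$) or yields a weight-$0$ condensation arc $C(i)\to C(j)$ (giving $t_i\geq t_j$), while a negative arc $(i,j)$ yields a weight-$1$ condensation arc $C(j)\to C(i)$, whence $t_j\geq t_i+1>t_i$. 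Compressing the image of $t$ produces the required update schedule.

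I expect the converse — and specifically the strongly-connected-component bookkeeping, namely verifying that the longest-path potential on the condensation is well defined and respects every constraint with the correct strict versus non-strict inequality — to be the main obstacle, whereas the reformulation and the forward direction are routine once the translation to $H$ is in place.
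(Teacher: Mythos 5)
Your proof is correct, but there is nothing in the paper to compare it against: the paper does not prove this theorem at all, it imports it from \cite{afmn11} (hence the citation in the theorem header) and uses it only as a black box, e.g.\ in Corollary~\ref{corollary:valid} and in the $\SPoly$-hardness reduction. Judged on its own, your argument is sound and self-contained. The translation of validity into feasibility of the difference constraints ($t_i \geq t_j$ across $\labplus$ arcs, $t_i < t_j$ across $\labminus$ arcs) is exact, since the definition of $\lab_B$ makes $\lab_B((i,j))=\labplus$ equivalent to $t_i \geq t_j$ for block indices, and compressing the image of any feasible $t:[n]\to\Z_{>0}$ to an initial segment yields a genuine ordered partition realizing $\lab$. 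The forward direction via the telescoping sum around a forbidden cycle is immediate, as you say. In the converse, the key structural step --- a negative arc of $\revneg{G}_{\lab}$ with both endpoints in one strongly connected component would close into a forbidden cycle, so all negative arcs cross between distinct components --- is correct, and the longest-path potential on the condensation then meets every constraint with the right strictness. One detail to pin down: if two components are joined by parallel arcs of both signs, the condensation arc between them must carry weight $1$; your phrasing ``weight $1$ if it comes from a negative arc and $0$ if positive'' is ambiguous in that case and, read the wrong way, would lose the strict inequality $t_i < t_j$ required for negative arcs. The cleanest fix is to keep the condensation as a multidigraph with one weighted arc per arc of $\revneg{G}_{\lab}$ crossing between components; it is still acyclic, and your arc-by-arc verification goes through verbatim. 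With that reading the proof is complete; it is essentially a feasibility argument for a system of difference constraints (Bellman--Ford-style potentials), which is a standard and pleasantly modular way to obtain the characterization.
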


As a corollary, one can decide in polynomial time whether a labeling is
valid.\\[.5em]
\decisionpb{Valid Update Digraph Problem (Valid-UD Problem)}
{a labeling $\lab:A\to\{\labplus,\labminus\}$ of a digraph $G=(V,A)$.}
{is $\lab$ valid?}

\begin{corollary}[\cite{afmn11}]
  \label{corollary:valid}
  {\bf Valid-UD Problem} is in $\Poly$.
\end{corollary}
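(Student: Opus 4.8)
The plan is to turn the combinatorial characterization of Theorem~\ref{theorem:lab_valid} directly into an efficient decision procedure. First I would build, in linear time, the auxiliary multidigraph $G'$ described in that theorem: keep every positive arc $(i,j)$ (those with $\lab((i,j))=\labplus$) unchanged, and replace every negative arc $(i,j)$ (those with $\lab((i,j))=\labminus$) by its reversal $(j,i)$, recording for each arc of $G'$ whether it is positive or negative. By Theorem~\ref{theorem:lab_valid}, $\lab$ is valid precisely when $G'$ contains no directed cycle carrying at least one negative arc, so it suffices to test for such a \emph{forbidden cycle} in polynomial time.

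The key observation is that an arc of a digraph lies on some directed cycle if and only if its two endpoints belong to the same strongly connected component. Hence a forbidden cycle exists if and only if some negative arc of $G'$, say the reversal $(j,i)$ of an original negative arc $(i,j)$, has both endpoints $i$ and $j$ in one strongly connected component of $G'$: in that case there is a directed path from $i$ back to $j$, which closed with the arc $(j,i)$ yields a cycle through a negative arc; conversely any forbidden cycle passes through a negative arc whose endpoints are thereby mutually reachable. The algorithm is then: compute the strongly connected components of $G'$ (for instance by Tarjan's algorithm, in linear time), and scan the negative arcs, declaring $\lab$ valid if and only if no negative arc has both endpoints in the same component.

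Correctness reduces to the equivalence just sketched, and the running time is linear in the size of $G$, so the problem is in $\Poly$. The only point that needs a little care is the bookkeeping of labels across the reversal, since $G'$ may genuinely be a multidigraph: reversing a negative arc to $(j,i)$ can create a pair of parallel arcs together with a pre-existing positive arc $(i,j)$. This causes no difficulty, because strong connectivity depends only on reachability and is insensitive to arc multiplicities, while the final scan inspects each negative arc through its own recorded endpoints; note also that by Remark~\ref{remark:loop} there are no negative loops, so every forbidden cycle has length at least two and is correctly captured by the component test.
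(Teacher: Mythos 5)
Your proof is correct and follows the same route the paper intends: the corollary is exactly the observation that the characterization of Theorem~\ref{theorem:lab_valid} reduces validity to detecting a forbidden cycle in the arc-reversed multidigraph, which is a polynomial-time task. Your SCC-based test (a negative arc of $\revneg{G}_{\lab}$ lies on a directed cycle if and only if its endpoints are in the same strongly connected component) is a sound and standard way to instantiate that detection step, and the equivalence you state is argued correctly in both directions.
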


We are interested in the following counting problem.\\[.5em]
\countingpb{Update Digraphs Counting (\#UD)}
{a digraph $G=(V,A)$.}
{$\SUD{G}=|\{ \lab:A \to \{\labminus,\labplus\} \mid \lab \text{ is valid} \}|$.}\\[.5em]

The following definition is motivated by Theorem~\ref{theorem:lab_valid}.

\begin{definition}
  Given a directed graph $G=(V,A)$, let $\un{G}=(V,\un{A})$ denote the undirected
  multigraph underlying $G$,
  {\em i.e.} with an edge $\{i,j\} \in \un{A}$ for each $(i,j) \in A$.
\end{definition}

\begin{remark}
  We can restrict our study to connected digraphs (that is,
  such that $\un{G}$ is connected),
  because according to Theorem~\ref{theorem:lab_valid} the only invalid
  labelings contain (forbidden) cycles. Given some $G$ with $V_1,\dots,V_k$ its
  connected components, and $G[V_i]$ the subdigraph induced by $V_i$, we
  straightforwardly have
  $$
    \SUD{G} = \prod_{i \in [k]} \SUD{G[V_i]},
  $$
  and this decomposition can be computed
  in linear time from folklore algorithms.
\end{remark}

\begin{theorem}
  {\bf \#UD} is in $\SPoly$.
\end{theorem}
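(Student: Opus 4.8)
The plan is to show $\#\texttt{UD}$ belongs to $\SPoly$ by exhibiting it as the counting version of a decision problem verifiable in polynomial time. Recall that a counting problem is in $\SPoly$ precisely when it counts the accepting paths of a nondeterministic polynomial-time Turing machine, equivalently when it counts the number of witnesses $w$ (of length polynomial in the input) satisfying a polynomial-time checkable relation $R(G,w)$. Here the natural witnesses are the labelings themselves: a labeling $\lab: A \to \{\labminus,\labplus\}$ is an object of size exactly $|A|$, hence polynomial in the size of $G$, and $\SUD{G}$ counts exactly those labelings for which the predicate ``$\lab$ is valid'' holds.

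First I would fix an encoding of labelings as binary strings of length $|A|$, assigning one bit per arc (say $\0$ for $\labminus$ and $\1$ for $\labplus$), so that the set of candidate witnesses is precisely $\bool^{|A|}$ and every such string is a syntactically well-formed labeling. Then I would invoke Corollary~\ref{corollary:valid}: the {\bf Valid-UD Problem} is in $\Poly$, so there is a deterministic polynomial-time algorithm that, given $G$ and a labeling $\lab$, decides whether $\lab$ is valid. This furnishes exactly the polynomial-time verifier $R(G,\lab)$ needed, where $R(G,\lab)$ holds iff $\lab$ is valid.

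With these two ingredients the membership is immediate: $\SUD{G} = |\{\lab \in \bool^{|A|} \mid R(G,\lab)\}|$ is the number of witnesses of an $\NP$-style relation whose verification is polynomial, which is the defining characterization of $\SPoly$. Equivalently, I would describe a nondeterministic polynomial-time machine that guesses a string in $\bool^{|A|}$, interprets it as a labeling, runs the Valid-UD decision procedure, and accepts iff the labeling is valid; the number of accepting computations of this machine on input $G$ is exactly $\SUD{G}$.

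There is essentially no substantial obstacle here, since the theorem is a routine consequence of Corollary~\ref{corollary:valid}: the only points requiring a word of care are that the witness length $|A|$ is genuinely polynomial in the input size (true, as it is at most quadratic in $|V|$) and that validity-checking is polynomial (exactly the content of the already-established corollary). The real difficulty of the paper lies in the matching hardness direction, namely proving $\SPoly$-\emph{completeness}, which is treated separately in Section~\ref{s:SPolyc}.
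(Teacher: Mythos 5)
Your proposal is correct and follows essentially the same route as the paper: guess a $\{\labplus,\labminus\}$-labeling nondeterministically and use Corollary~\ref{corollary:valid} as the polynomial-time validity check, so that accepting branches are counted by $\SUD{G}$. The extra care you take about witness encoding and witness length is fine but not needed beyond what the paper's two-line argument already establishes.
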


\begin{proof}
  The following non-deterministic algorithm runs in polynomial time
  (step~\ref{step:valid} from Corollary~\ref{corollary:valid}), and its number
  of accepting branches equals $\SUD{G}$:
  \begin{enumerate}
    \item guess a labeling $\lab:A\to\{\labplus,\labminus\}$ (polynomial
      space),
    \item\label{step:valid} accept if $\lab$ is valid, otherwise reject.\\[-2.5em]
  \end{enumerate}
\end{proof}

\section{Further known results}
\label{s:known}

The consideration of update digraphs has been initiated by Aracena {\em et al.}
in 2009~\cite{agms09}, with their characterization
(Theorem~\ref{theorem:lab_valid}) in~\cite{afmn11}.
In Section~\ref{s:SPolyc} we will present a problem closely related to {\bf \#UD}
that has been proven to be $\NP$-complete in~\cite{afmn11}, {\bf UD Problem},
and bounds that we can deduce on {\bf \#UD} (Corollary~\ref{coro:bounds},
from~\cite{adfm13b}).
In~\cite{adfm13a} the authors present an algorithm to enumerate update digraphs,
and prove its correctness. They also consider a surprisingly complex question:
given an AN $f$, knowing whether there exist two block-sequential update schedules
$B,B'$ such that $f^{(B)} \neq f^{(B')}$, is $\NP$-complete.
The value of $\SUD{G}$ is known to be
$3^n-2^{n+1}+2$ for bidirected cycles on $n$ vertices~\cite{pmmor19},
and to equal $n!$ if and only if the digraph is a tournament on $n$ vertices~\cite{adfm13b}.

\section{Counting update digraphs is $\SPoly$-complete}
\label{s:SPolyc}

The authors of~\cite{afmn11} have exhibited an insightful relation
between valid labelings and feedback arc sets of a digraph. We recall that a
{\em feedback arc set} (FAS) of $G=(V,A)$ is a subset of arcs $F \subseteq
A$ such that the digraph $(V,A \setminus F)$ is acyclic, and its
size is $|F|$. This relation is developed inside the proof of $\NP$-completeness
of the following decision problem. We reproduce it as a Lemma, with its
argumentation for the sake of comprehension.\\[.5em]
\decisionpb{Update Digraph Problem (UD Problem)}
{a digraph $G=(V,A)$ and an integer $k$.}
{does there exist a valid labeling of size at most $k$?}\\[.5em]
The {\em size} of a labeling is its number of $\labplus$ labels. It is clear
that minimizing the number of $\labplus$ labels (or equivalently maximizing the
number of $\labminus$ labels) is the difficult direction, the contrary being easy
because $\lab(a)=\labplus$ for all $a \in A$ is always valid (and corresponds
to the parallel update schedule $B^\texttt{par}$).

\begin{lemma}[appears in {\cite[Theorem~16]{afmn11}}]
  \label{lemma:labFAS}
  There exists a bijection between minimal valid labelings and minimal feedback
  arc sets of a digraph $G=(V,A)$.
\end{lemma}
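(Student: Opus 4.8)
The plan is to establish the bijection by connecting both sides to a common combinatorial object, namely the acyclic orientations of $\un{G}$ compatible with the original arc directions. The key insight from Theorem~\ref{theorem:lab_valid} is that a labeling $\lab$ corresponds to the multidigraph $G^{\lab}$ obtained by reversing the orientation of every $\labminus$-arc, and validity means $G^{\lab}$ has no forbidden cycle. First I would reformulate the size of a labeling: since the size counts $\labplus$ labels, a labeling with \emph{few} $\labplus$ labels is one where \emph{many} arcs are reversed. A minimal valid labeling (with respect to set inclusion on the $\labplus$ labels, equivalently a labeling one cannot make ``more negative'' while staying valid) should correspond exactly to a situation where the set of $\labplus$-arcs forms a minimal FAS of $G$.

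The core of the argument is to make precise the map $\lab \mapsto F_{\lab} := \{a \in A \mid \lab(a) = \labplus\}$ and show it restricts to a bijection on the minimal objects. The first direction is to show that if $\lab$ is valid then $F_{\lab}$ is a FAS of $G$: I would argue that removing all $\labplus$-arcs from $G$ leaves only $\labminus$-arcs, and by Theorem~\ref{theorem:lab_valid} a valid labeling forbids any cycle using a reversed (negative) arc; one then needs to check that an all-negative cycle in $G \setminus F_{\lab}$ would itself be a forbidden cycle, so $(V, A \setminus F_{\lab})$ is acyclic. Conversely, given a FAS $F$, I would define $\lab$ by setting $\labplus$ on $F$ and $\labminus$ elsewhere, and verify via the characterization that this labeling is valid: a forbidden cycle would have to traverse some negative (hence non-$F$) arc against its orientation, and tracing through the definition this produces a directed cycle in $(V, A\setminus F)$, contradicting that $F$ is a FAS.

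Having set up the correspondence at the level of all valid labelings and all FAS, the final step is to check that it matches up the \emph{minimal} elements on each side and that the matching is inclusion-preserving, so that ``minimal'' transfers correctly. Here minimality of a valid labeling means no $\labplus$ can be flipped to $\labminus$ while preserving validity, which under the map $\lab \mapsto F_{\lab}$ is precisely the statement that no arc can be removed from $F_{\lab}$ while keeping it a FAS, i.e.\ $F_{\lab}$ is a minimal FAS; the two notions of minimality coincide because the map is a bijection between all valid labelings and all FAS that is monotone with respect to the relevant orderings.

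The step I expect to be the main obstacle is the careful verification that the two minimality notions align, and in particular pinning down the right partial order. The subtlety is that Theorem~\ref{theorem:lab_valid}'s forbidden-cycle condition mixes arc orientations with labels in a way where one must correctly track which direction a negative arc contributes; I would need to be meticulous that a cycle in the reversed multidigraph $G^{\lab}$ corresponds to a genuine directed cycle in $(V, A \setminus F_{\lab})$ and vice versa, with no spurious cycles introduced or lost by the reversal. Once the equivalence ``forbidden cycle in $G^{\lab}$ $\iff$ directed cycle in $(V, A\setminus F_{\lab})$'' is firmly established, both the bijection and the transfer of minimality follow cleanly.
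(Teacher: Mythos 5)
Your easy direction (every valid labeling $\lab$ yields a FAS $F_{\lab}$) matches the paper, but the converse as you state it is false, and this is exactly where the content of the lemma lies. You claim that for \emph{any} FAS $F$ the labeling $\lab$ with $F_{\lab}=F$ is valid, arguing that a forbidden cycle would ``produce a directed cycle in $(V,A\setminus F)$''. That argument only works for forbidden cycles consisting entirely of negative arcs: undoing the reversals of such a cycle does give a directed cycle of $G$ avoiding $F$. But a forbidden cycle of $\revneg{G}_{\lab}$ may mix positive arcs (traversed in their $G$-orientation, and these lie \emph{inside} $F$) with negative arcs (traversed against their $G$-orientation); such a mixed cycle yields no directed cycle in $(V,A\setminus F)$ at all. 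The paper exhibits a concrete counterexample (Figure~\ref{fig:labFAS}): $F=\{(1,2),(2,3),(3,1)\}$ is a FAS of the depicted digraph, yet the labeling with exactly these arcs positive is invalid. Consequently the global correspondence ``all valid labelings $\leftrightarrow$ all FAS'' that you invoke does not exist, and the monotonicity argument you build on top of it to transfer minimality collapses with it.

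The missing idea is that minimality of $F$ must be used as an \emph{input} to the validity proof, not invoked afterwards. The paper's argument runs as follows: if $F$ is a minimal FAS, then every $a\in F$ lies on a directed cycle $C_a$ of $G$ containing no other arc of $F$ (otherwise $F\setminus\{a\}$ would still be a FAS). Now suppose the labeling with $F_{\lab}=F$ admitted a forbidden cycle $C$. Each positive arc $a=(u,v)$ of $C$ can be replaced by the path $C_a\setminus\{a\}$, which consists only of arcs outside $F$ (hence negative) and runs from $v$ to $u$ in $G$, i.e.\ from $u$ to $v$ after the reversal performed in $\revneg{G}_{\lab}$. Performing all these replacements turns $C$ into a closed walk of $\revneg{G}_{\lab}$ using only negative arcs, which is a directed cycle of $G$ avoiding $F$ --- contradicting that $F$ is a FAS. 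With both ingredients in hand (every valid labeling gives a FAS; every \emph{minimal} FAS gives a valid labeling), the bijection on minimal objects does follow by the kind of order argument you sketch: a valid labeling strictly below a minimal one would produce a strictly smaller FAS, and a FAS strictly below a minimal one would contain a minimal FAS whose labeling is valid and strictly smaller. But without the cycle-replacement step, your proof has a genuine hole at precisely the point you flagged as the main obstacle.
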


\begin{proof}
  To get the bijection, we simply identify a labeling $\lab$ with its set of
  arcs labeled $\labplus$, denoted $F_{\lab}=\{ a \in A \mid
  \lab(a)=\labplus \}$.

  Given any valid labeling $\lab$, $F_{\lab}$ is a FAS, otherwise there is a
  cycle with all its arcs label $\labminus$, which is forbidden.

  Given a minimal FAS $F$, let us now argue that $\lab$ such that $F_{\lab}=F$ is
  a valid labeling. First observe that for every $a \in F$ there is a cycle in $G$
  containing $a$ and no other arc of $F$, otherwise $F$ would not be minimal
  (removing an arc $a$ not fulfilling this observation would give a smaller FAS).
  By contradiction, if $\lab$ creates a forbidden cycle $C$, then to every
  arc $a$ labeled $\labplus$ in $C$ we can associate, from the previous
  observation, a cycle $C_a$ containing no other arc labeled $\labplus$ and
  distinct from $C$. It follows that replacing all such arcs $a$ in $C$ with
  $C_a \setminus \{a\}$ gives a cycle which now contains only arcs labeled
  $\labminus$, {\em i.e.} a cycle (recall that the orientation of $\labminus$
  arcs is reversed in forbidden cycles) with no arc in $F$, a contradiction.
\end{proof}

Any valid labeling corresponds to a FAS, and every minimal FAS corresponds to
a valid labeling, hence the following bounds hold. The strict inequality for
the lower bound comes from the fact that labeling all arcs $\labplus$ does not
give a minimal FAS, as noted in~\cite{adfm13b} where the authors also
consider the relation between update digraphs (valid labelings) and feedback
arc sets, but from another perspective.

\begin{corollary}[\cite{afmn11}]
  \label{coro:bounds}
  For any digraph $G$, let $\#\texttt{FAS}(G)$ and $\#\texttt{MFAS}(G)$ be respectively the number
  of FAS and minimal FAS of $G$, then $\#\texttt{MFAS}(G) < \SUD{G} \leq \#\texttt{FAS}(G)$.
\end{corollary}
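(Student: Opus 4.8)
The plan is to leverage the two one-directional facts already isolated in the proof of Lemma~\ref{lemma:labFAS}, via the identification $\lab \mapsto F_{\lab}$, which is a bijection between the $2^{|A|}$ labelings of $G$ and the $2^{|A|}$ subsets of $A$. Under this bijection, $\SUD{G}$ counts exactly those subsets $F \subseteq A$ whose associated labeling is valid; I will denote this collection $\mathcal{V} = \{F_{\lab} : \lab \text{ valid}\}$, so that $\SUD{G} = |\mathcal{V}|$. The strategy is to sandwich $\mathcal{V}$ between the family of minimal feedback arc sets and the family of all feedback arc sets, and then to exhibit one explicit element of $\mathcal{V}$ that is not a minimal FAS in order to make the lower bound strict.

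For the upper bound, I would invoke the first half of the Lemma's proof: any valid labeling $\lab$ yields a set $F_{\lab}$ that is a FAS (otherwise some cycle would carry only $\labminus$ labels, which is forbidden). Hence $\mathcal{V}$ is contained in the family of all FAS, and counting gives $\SUD{G} = |\mathcal{V}| \leq \#\texttt{FAS}(G)$. For the lower bound, I would invoke the second half: every \emph{minimal} FAS $F$ gives rise to a valid labeling with $F_{\lab}=F$, so the family of minimal FAS is contained in $\mathcal{V}$, yielding $\#\texttt{MFAS}(G) \leq |\mathcal{V}| = \SUD{G}$.

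The remaining task, and the only genuinely delicate point, is to upgrade the last inequality to a strict one. The witness is the all-$\labplus$ labeling, which is valid (it is the update digraph of $B^{\texttt{par}}$) and satisfies $F_{\lab}=A$, so that $A \in \mathcal{V}$. Assuming $A \neq \emptyset$, I would observe that $A$ is \emph{not} a minimal FAS: for any arc $a$, the digraph $(V,\{a\})$ is acyclic (loops are excluded by Remark~\ref{remark:loop}), so $A \setminus \{a\}$ is already a FAS, properly contained in $A$. Thus $A$ belongs to $\mathcal{V}$ but not to the family of minimal FAS, the inclusion $\{\text{minimal FAS}\} \subseteq \mathcal{V}$ is strict, and since all these sets are finite we conclude $\#\texttt{MFAS}(G) < \SUD{G}$.

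The main obstacle is conceptual rather than computational: one must keep straight that the relationship between valid labelings and feedback arc sets is only an inclusion in each direction (valid labelings inject into FAS; minimal FAS inject into valid labelings) and not an equality, so that $\mathcal{V}$ sits genuinely strictly between the two extremes. Everything else reduces to counting finite sets and to the one-line observation that the saturated set $A$ can never be a minimal FAS once $G$ has at least one arc.
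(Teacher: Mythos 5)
Your proof is correct and follows essentially the same route as the paper: both directions of the sandwich come from the two halves of Lemma~\ref{lemma:labFAS} (valid labelings inject into FAS; minimal FAS inject into valid labelings via $\lab \mapsto F_{\lab}$), and the strictness witness is the all-$\labplus$ labeling, whose arc set $A$ is a FAS but never a minimal one. If anything, you are slightly more careful than the paper in spelling out why $A$ is not minimal and in flagging the degenerate case $A=\emptyset$ (where the stated strict inequality actually fails), which the paper passes over silently.
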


From Lemma~\ref{lemma:labFAS} and results on the complexity of FAS counting
problems presented in~\cite{p19} (one of them coming from~\cite{ss02}),
we have the following corollary (minimum FAS are minimal,
hence the identity is a parsimonious reduction from the same
problems on FAS).

\begin{corollary}
  Counting the number of valid labelings of minimal size is $\SPoly$-complete,
  and counting the number of valid labelings of minimum size is
  $\mathsf{\#\!\cdot\!OptP[\log n]}$-complete.
\end{corollary}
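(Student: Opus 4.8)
The plan is to exploit the fact that the bijection of Lemma~\ref{lemma:labFAS} is \emph{size-preserving}. Identifying a labeling $\lab$ with the set $F_{\lab}$ of its arcs labeled $\labplus$, the size of $\lab$ is by definition exactly $|F_{\lab}|$; hence the bijection $\lab \mapsto F_{\lab}$ between minimal valid labelings and minimal feedback arc sets matches objects of equal cardinality. Since every minimum FAS is in particular a minimal FAS, the same bijection restricts to one between the valid labelings of minimum size and the minimum feedback arc sets of $G$. Consequently the map sending a digraph $G$ to itself is a \emph{parsimonious} reduction, simultaneously for the ``minimal'' and the ``minimum'' variants, between the two counting problems on labelings and the corresponding counting problems on feedback arc sets. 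All that remains is to check membership and to transfer known hardness through this identity reduction.

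First I would settle membership. For the minimal variant, an $\SPoly$ machine guesses a labeling $\lab:A\to\{\labplus,\labminus\}$ and then verifies in polynomial time that $\lab$ is valid (Corollary~\ref{corollary:valid}) and that it is minimal --- the latter by testing, for each arc labeled $\labplus$, that flipping it to $\labminus$ destroys validity, which is just polynomially many validity tests. The number of accepting branches is then exactly the number of minimal valid labelings, so this problem is in $\SPoly$. For the minimum variant, the quantity being optimised is the size of the labeling, an integer bounded by $|A|$ and hence encodable on $O(\log n)$ bits; counting the optimal witnesses of such a logarithmically bounded optimisation is precisely what the class $\mathsf{\#\!\cdot\!OptP[\log n]}$ captures, giving membership there.

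For hardness I would invoke the results collected in~\cite{p19} (one of them originating in~\cite{ss02}), namely that counting minimal feedback arc sets is $\SPoly$-complete and that counting minimum feedback arc sets is $\mathsf{\#\!\cdot\!OptP[\log n]}$-complete. Because the identity reduction above is parsimonious and respects the relevant notion of optimality, these two hardness statements transfer verbatim to the corresponding labeling counting problems, yielding completeness in each case.

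The step requiring the most care is the claim that the size-preserving bijection genuinely restricts to the \emph{minimum}-size elements. This is exactly where the observation that minimum FAS are minimal is indispensable: it guarantees that every minimum FAS already lies in the codomain of the bijection of Lemma~\ref{lemma:labFAS}, so that no minimum solution is missed; combined with size preservation, this forces the minimum valid labelings to be precisely the preimages of the minimum FAS and their two counts to coincide. Once this restricted correspondence is pinned down, the remaining arguments are routine bookkeeping.
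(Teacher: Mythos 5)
Your core argument --- identifying $\lab$ with $F_{\lab}$, noting that the bijection of Lemma~\ref{lemma:labFAS} is size-preserving, observing that minimum FAS are minimal so that the bijection restricts to the minimum-size objects, and transferring the completeness results of~\cite{p19} and~\cite{ss02} through the identity map as a parsimonious reduction --- is exactly the paper's proof, which compresses all of this into the sentence preceding the corollary.

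There is, however, a genuine error in your $\SPoly$-membership verifier for the minimal variant. You propose to certify minimality of a valid labeling $\lab$ by checking that flipping any single arc of $F_{\lab}$ from $\labplus$ to $\labminus$ destroys validity. This local test does not characterize inclusion-minimality, because validity is not monotone in $F_{\lab}$ --- this non-monotonicity is precisely the phenomenon illustrated by Figure~\ref{fig:labFAS}. Concretely, take $G$ the bidirected triangle (both arcs $(i,j)$ and $(j,i)$ for all $i \neq j$ in $\{1,2,3\}$) and $\lab$ the all-$\labplus$ labeling. It is valid, and flipping any single arc, say $(1,2)$, to $\labminus$ creates the forbidden cycle $(2,1)[\labminus],(1,3)[\labplus],(3,2)[\labplus]$; by symmetry every single flip is invalid, so your verifier accepts. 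Yet the all-$\labplus$ labeling is \emph{not} minimal: the labeling whose $\labplus$ arcs are exactly $\{(2,1),(3,1),(3,2)\}$ (a minimal FAS, the complement of an acyclic tournament) is valid with a properly smaller $\labplus$-set. Hence your machine's accepting branches overcount the minimal valid labelings. The repair is to perform the minimality test on the FAS side, where monotonicity does hold: by Lemma~\ref{lemma:labFAS}, $\lab$ is a minimal valid labeling if and only if $F_{\lab}$ is a minimal FAS, and since any superset of an FAS is an FAS, this holds if and only if for every $a \in F_{\lab}$ the digraph $(V,(A \setminus F_{\lab}) \cup \{a\})$ contains a cycle --- polynomially many acyclicity tests. (Alternatively, membership requires no algorithm at all: by the bijection the counting function to be analyzed is \emph{equal}, as a function of $G$, to the minimal-FAS counting function, so membership transfers through the identity just as hardness does.) With this correction, your argument coincides with the paper's; the remainder (the minimum variant and both hardness transfers) is sound.
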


However the correspondence given in Lemma~\ref{lemma:labFAS} does not hold in
general: there may exist some FAS $F$ such that $\lab$ with $F_{\lab}=F$ is not
a valid labeling (see Figure~\ref{fig:labFAS} for an example). As a
consequence we do not directly get a counting reduction to {\bf \#UD}.  It nevertheless
holds that {\bf \#UD} is $\SPoly$-hard, with the following reduction.

\begin{figure}
  \centerline{\includegraphics{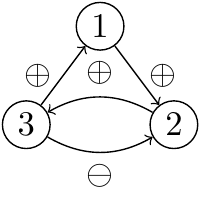}}
  \caption{
    $F=\{(1,2),(2,3),(3,1)\}$ is a FAS, but the corresponding labeling
    is not valid: component $3$ is updated prior to $2$,
    $1$ not prior to $2$, and $3$ not prior to $1$, which is impossible.
  }
  \label{fig:labFAS}
\end{figure}

\begin{theorem}
  \label{theorem:SUD}
  {\bf \#UD} is $\SPoly$-hard.
\end{theorem}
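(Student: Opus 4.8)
The plan is to prove $\SPoly$-hardness of \textbf{\#UD} by a parsimonious (counting-preserving) reduction from a known $\SPoly$-complete counting problem. The natural starting point, given the whole thrust of the preceding exposition, is the connection to feedback arc sets. Counting feedback arc sets (\#FAS) is $\SPoly$-complete, so the goal is to build from an arbitrary digraph $G$ a new digraph $G'$ (a \emph{gadget-enriched} version of $G$) such that $\SUD{G'}$ equals, or is a simple recoverable function of, $\#\texttt{FAS}(G)$. The central difficulty, already flagged explicitly in the excerpt and illustrated by Figure~\ref{fig:labFAS}, is that the clean bijection of Lemma~\ref{lemma:labFAS} only holds for \emph{minimal} FAS: a non-minimal FAS $F$ can induce an invalid labeling, because the set of arcs labeled $\labplus$ may fail to break a forbidden cycle even while it breaks every directed cycle. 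So I cannot just read off $\SUD{G}=\#\texttt{FAS}(G)$; the reduction must engineer the instance so that every FAS of the \emph{source} object corresponds to a genuinely valid labeling of the \emph{target} object.

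The approach I would take is to replace each arc of $G$ by a small gadget that decouples the "breaks a cycle" condition from the "validity" condition. Concretely, the obstruction in Figure~\ref{fig:labFAS} is that a single $\labplus$ arc can be forced to simultaneously witness the breaking of several cycles, while validity (Theorem~\ref{theorem:lab_valid}) demands that reversing $\labminus$ arcs leave no cycle carrying a $\labminus$ arc. I would therefore design a gadget — for instance, subdividing each arc and attaching a short auxiliary $2$-cycle or "pendant" structure to each subdivided arc — so that for each original arc the gadget contributes an independent binary choice (label $\labplus$ vs.\ $\labminus$ on the gadget's critical arc), and the validity constraint on the enriched digraph reproduces exactly the acyclicity constraint of $G$ with no spurious forbidden cycles introduced and no valid labelings lost. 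The key is that each gadget should have the property that, conditioned on the choice made on the "spine" arc corresponding to the original arc of $G$, the remaining gadget arcs have a unique valid completion (contributing a multiplicative factor of exactly $1$), so that the count is preserved on the nose.

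The key steps, in order, are: (i) fix the source problem as \#FAS (or a suitable restricted $\SPoly$-complete FAS-counting variant) and recall its hardness; (ii) define the gadget replacing each arc and thereby the map $G \mapsto G'$, verifying it is computable in polynomial time; (iii) prove that applying Theorem~\ref{theorem:lab_valid} to $G'$, the forbidden cycles of $G'$ are in correspondence precisely with the directed cycles of $G$, so that a labeling of $G'$ is valid if and only if the induced selection of spine-$\labplus$ arcs forms a FAS of $G$; (iv) prove that the correspondence is a bijection between valid labelings of $G'$ and feedback arc sets of $G$, establishing $\SUD{G'}=\#\texttt{FAS}(G)$ (possibly up to a fixed factor contributed by forced gadget choices, which is fine for a counting reduction); and (v) conclude $\SPoly$-hardness of \textbf{\#UD}. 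I expect the main obstacle to be step (iii): one must argue both \emph{soundness} (no gadget introduces a new forbidden cycle that could make an otherwise-FAS selection invalid) and \emph{completeness} (every FAS, including non-minimal ones, lifts to a valid labeling of $G'$) — precisely the gap that broke the naive reduction on $G$ itself. Getting the gadget to absorb non-minimality, so that every FAS (not only the minimal ones) yields a valid labeling, is the crux of the construction.
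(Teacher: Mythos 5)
Your proposal stalls exactly where you yourself locate the crux: the gadget with the properties you require is never constructed, and in fact it cannot exist, so steps (iii)--(iv) cannot be carried out. Unpack what the gadget must do. For the count to equal $\#\texttt{FAS}(G)$, each per-arc gadget must --- up to a unique completion --- realize exactly two two-terminal behaviours: when the spine is $\labplus$ it must transmit \emph{no} path between its endpoints in the reversed graph $\revneg{D}_{\lab}$, and when the spine is $\labminus$ it must transmit a negative path in exactly one direction, i.e.\ be in the state the paper denotes $(-,\varnothing)$. Indeed, if a $\labplus$ gadget transmitted any path, forbidden cycles could still mix $\labplus$ and $\labminus$ arcs, and the target's validity constraint would again be strictly stronger than ``the spine-$\labminus$ set is acyclic'' --- this is precisely the failure of Figure~\ref{fig:labFAS} you are trying to absorb. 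Note that plain subdivision changes nothing here: forbidden cycles are topological and survive subdivision, so the subdivided graph reproduces the validity structure of $G$ and absorbs no non-minimality. Moreover, the two desired labelings must be the \emph{only} valid labelings of the gadget: any further valid labeling (in any state) extends to a spurious valid labeling of the target by setting all other gadgets to all-$\labplus$, already breaking the bijection when $G$ is a single arc.

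No digraph has exactly those two valid labelings. If the gadget contains a directed cycle, it has at least two distinct minimal FAS (for a minimal FAS $F$ and $a\in F$, the set $A\setminus\{a\}$ is a FAS and contains a minimal FAS avoiding $a$), hence by Lemma~\ref{lemma:labFAS} at least two minimal valid labelings, plus the all-$\labplus$ labeling, which is valid and never minimal: at least three valid labelings in total. If instead the gadget is acyclic, its valid labelings are exactly the acyclic orientations of its underlying multigraph (this is the bijection in the paper's own proof), and a connected multigraph has exactly two acyclic orientations only if it is a bundle of parallel edges, whose two labelings are in states $(+,\varnothing)$ and $(\varnothing,-)$ --- they always transmit a path, never the blocking state $(\varnothing,\varnothing)$ you need for $\labplus$. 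Pendant auxiliary structures cannot rescue this: by Lemma~\ref{lemma:freecomp} a free-composed (pendant) subgraph merely multiplies every count by its own $\SUD{\cdot}$ (a pendant $2$-cycle multiplies by $3$) and imposes no constraint, while gadgets with non-unique completions (such as the two blocking labelings of the two-arc graph $u\to w\leftarrow v$) make the number of completions depend on the spine values, turning the count into a weighted sum over partial acyclic orientations rather than $\#\texttt{FAS}(G)$. There is also a gap upstream: the $\SPoly$-hardness of counting \emph{all} FAS is asserted without justification; the results invoked in the paper concern minimal FAS ($\SPoly$-complete) and minimum FAS ($\mathsf{\#\!\cdot\!OptP[\log n]}$-complete), and neither transfers parsimoniously to all FAS. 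The paper's actual proof sidesteps every one of these issues by reducing from counting acyclic orientations (Linial): orient the undirected instance $G$ by an arbitrary total order to obtain a DAG $G'$; since $G'$ is acyclic, every cycle of $\revneg{G'}_{\lab}$ must use a $\labminus$ arc, so by Theorem~\ref{theorem:lab_valid} a labeling of $G'$ is valid if and only if the corresponding reorientation of $G$ is acyclic --- a parsimonious bijection with no gadgets at all.
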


\begin{proof}
  We present a (polynomial time) parsimonious reduction from the problem of
  counting the number of acyclic orientations of an undirected graph, proven to
  be $\SPoly$-hard in~\cite{l86}.

  Given an undirected graph $G=(V,E)$, let $\prec$ denote an arbitrary total order
  on $V$. Construct the digraph $G'=(V,A)$ with
  $A$ the orientation of $E$ according to $\prec$,
  $$
    (u,v) \in A \quad\iff\quad \{u,v\} \in E \text{ and } u \prec v.
  $$
  An example is given on Figure~\ref{fig:A}.
  A key property is that $G'$ is acyclic, because $A$ is constructed from
  an order $\prec$ on $V$ (a cycle would have at least one arc $(u,v)$ with $v
  \prec u$).

  \begin{figure}
    \centerline{\includegraphics{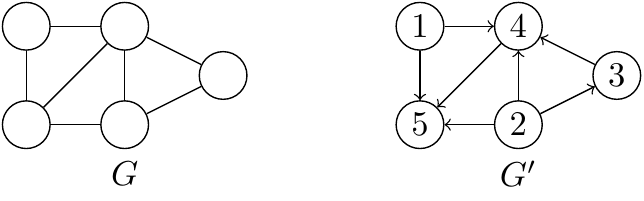}}
    \caption{
      An undirected graph $G$ (instance of acyclic orientation counting),
      and the obtained digraph $G'$ (instance of update digraph counting).
    }
    \label{fig:A}
  \end{figure}

  We claim that there is a bijection between the valid labelings of $G'$ and
  the acyclic orientations of $G$: to a valid labeling
  $\lab:A\to\{\labplus,\labminus\}$ of $G'$ we associate the orientation
  $$
    \begin{array}{rr}
      O= & \{ (u,v) \mid (u,v) \in A \text{ and } \lab((u,v))=\labplus \}\phantom{.}\\
      & \cup~ \{ (v,u) \mid (u,v) \in A \text{ and } \lab((u,v))=\labminus \}.
    \end{array}
  $$

  \begin{figure}
    \centerline{\includegraphics{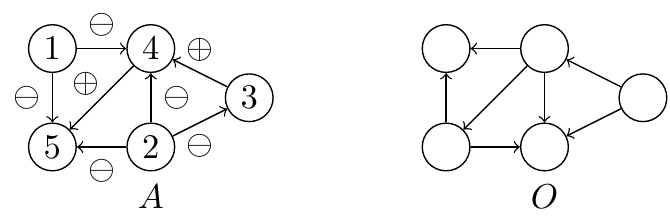}}
    \caption{
      A valid labeling $A$ of $G'$, and the corresponding
      orientation $O$ of $G$.
    }
    \label{fig:O}
  \end{figure}


  First remark that $O$ is indeed an orientation of $E$: each edge of $E$ is
  transformed into an arc of $A$, and each arc of $A$ is
  transformed into an arc of $O$. An example is given on Figure~\ref{fig:O}.
  Now observe that $O$ is exactly obtained from $G'$ by reversing the
  orientation of arcs labeled $\labminus$ by $\lab$. Furthermore, a cycle in
  $O$ must contain at least one arc labeled $\labminus$ by $\lab$, because $G'$
  is acyclic and $\labplus$ labels copy the orientation of $G'$.  The claim
  therefore follows directly from the characterization of
  Theorem~\ref{theorem:lab_valid}.
\end{proof}

\section{Quasi-quadratic time algorithm for oriented cacti}
\label{s:cacti}

The difficulty of counting the number of update digraphs comes from the
interplay between various possible cycles, as is assessed by the parsimonious reduction from
acyclic orientations counting problem to ${\bf \#UD}$.
Answering the problem for an oriented tree with $m$ arcs is for example very simple:
all of the $2^m$ labelings are valid.
Cactus undirected graphs are defined in terms of very restricted entanglement of cycles,
which we can exploit to compute the number of update digraphs for any
orientation of its edges.

\begin{definition}
  A {\em cactus} is a connected undirected graph such that any vertex (or
  equivalently any edge) belongs to at most one simple cycle (cycle without
  vertex repetition). An {\em oriented cactus} $G$ is a digraph such that $\un{G}$
  is a cactus.
\end{definition}

Cacti may intuitively be thought as trees with 
cycles. This is indeed the idea behind the {\em skeleton} of a cactus
introduced in~\cite{bk98}, via the following notions:
\begin{itemize}
  \item a {\em c-vertex} is a vertex of degree two included in exactly one cycle,
  \item a {\em g-vertex} is a vertex not included in any cycle,
  \item remaining vertices are {\em h-vertices},
\end{itemize}
and a {\em graft} is a maximal subtree of g- and h-vertices with no two h-vertices
belonging to the same cycle. Then a cactus can be decomposed as
grafts and cycles (two classes called {\em blocks}),
connected at h-vertices according to a tree skeleton.
These notions directly apply to oriented cacti
(see an example on Figure~\ref{fig:cactus}).

\begin{figure}
  \centerline{\includegraphics{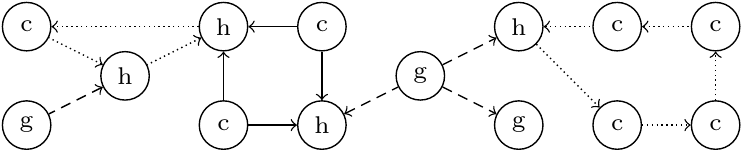}}
  \caption{
    An oriented cactus $G$, with $\{$c,g,h$\}$-vertex labels.
    Graft arcs are dashed, cycles forming directed cycles are dotted,
    and cycles not forming directed cycles are solid.
    Theorem~\ref{theorem:cactus} counts
    $\SUD{G}=2^12^3(2^3-1)(2^5-1)(2^4-2)=48\,608$.
  }
  \label{fig:cactus}
\end{figure}

\begin{theorem}
  \label{theorem:cactus}
  {\bf \#UD} is computable in time $\O(n^2 \log n \log\log n)$ for oriented cacti.
\end{theorem}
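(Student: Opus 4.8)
The plan is to exploit the block-tree (skeleton) decomposition of the cactus to reduce the global count $\SUD{G}$ to a product of local contributions, one per block. By Theorem~\ref{theorem:lab_valid}, a labeling is valid if and only if it creates no forbidden cycle, and in a cactus every simple cycle is edge-disjoint and vertex-disjoint except possibly sharing h-vertices. The crucial structural observation I would establish first is that a forbidden cycle can only live inside a single block: since grafts are trees (acyclic) and distinct cycles in a cactus meet in at most a single cut vertex, no simple cycle of $\un{G}$ can cross from one block into another. This means the validity constraints \emph{decouple} across blocks, so the labelings of the arcs in different blocks can be chosen independently.

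\begin{proof}[Proof sketch]
First I would make precise the decoupling. Each arc of $G$ belongs to exactly one block (a graft or a cycle) of the skeleton, because an edge of a cactus lies in at most one simple cycle. A forbidden cycle, in the sense of Theorem~\ref{theorem:lab_valid}, corresponds (after reversing $\labminus$ arcs) to a cycle in $\un{G}$, and every cycle of $\un{G}$ is a simple cycle confined to a single cyclic block. Hence a labeling $\lab$ is valid if and only if its restriction to each block is valid in isolation, and since the arc sets of distinct blocks are disjoint,
$$
  \SUD{G} = \prod_{\text{blocks } B} \SUD{G[B]}.
$$
It remains to count valid labelings block by block.

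For a graft $B$ with $m_B$ arcs, $\un{G[B]}$ is a tree, so it contains no cycle and every one of the $2^{m_B}$ labelings is valid, contributing the factor $2^{m_B}$. For a cyclic block $B$ on a simple cycle with $\ell_B$ arcs, I would count directly: a labeling restricted to the cycle creates a forbidden cycle precisely when reversing the $\labminus$ arcs produces a directed cycle around the ring carrying at least one $\labminus$. Two orientations of the ring (clockwise and counterclockwise) are the only ways to obtain a directed cycle on a simple cycle; the all-$\labplus$ labeling already respects the acyclic base orientation, so among the $2^{\ell_B}$ labelings exactly the ones yielding a consistent cyclic orientation (with at least one reversed arc) are forbidden. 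Counting these explicitly in terms of how many of the underlying arcs agree with each of the two cyclic orientations gives a closed form; for a cycle whose underlying orientation in $G$ is itself a directed cycle one subtracts $2$ forbidden labelings (both cyclic orientations), and otherwise one subtracts $1$, yielding the factors $(2^{\ell_B}-2)$ and $(2^{\ell_B}-1)$ respectively. These match the factors displayed in Figure~\ref{fig:cactus}.

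Finally I would address running time. The skeleton of a cactus, together with the partition of arcs into blocks and the classification of each cyclic block as directed-or-not, is computable in linear time by a standard biconnected-component / DFS traversal. The count is then a product of at most $n$ factors, each of the form $2^{m_B}$, $(2^{\ell_B}-1)$, or $(2^{\ell_B}-2)$, whose bit-lengths sum to $O(n)$. The numbers grow to $O(n)$ bits, so each multiplication of the running product costs the time of multiplying an $O(n)$-bit number by a smaller factor; using fast integer multiplication in $\O(n \log n \log\log n)$ per product and accumulating over $O(n)$ blocks yields the claimed $\O(n^2 \log n \log\log n)$ bound. The main obstacle is the per-block cycle count: one must verify carefully, via Theorem~\ref{theorem:lab_valid}, that exactly one (resp.\ two) cyclic orientations are forbidden and that no interaction between the two possible cycle directions creates additional forbidden labelings, which is where the edge-disjointness of cactus cycles is essential.
\end{proof}
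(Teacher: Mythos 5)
Your overall strategy is exactly the paper's: compute the skeleton, argue that forbidden cycles cannot cross blocks so that validity decouples and $\SUD{G}$ is a product of per-block counts, handle grafts ($2^{m_B}$ each) and cycle blocks separately, and multiply $\O(n)$ factors of $\O(n)$ bits with Sch\"onhage--Strassen to get the stated bound. However, your case analysis for cycle blocks is inverted, and the resulting formula is wrong. Consider a block whose arcs form a directed cycle, say clockwise. The labeling that realizes the clockwise ring orientation in $\revneg{G}_{\lab}$ is the all-$\labplus$ labeling, which contains no $\labminus$ arc and is therefore \emph{valid} by Theorem~\ref{theorem:lab_valid} (a forbidden cycle must carry at least one negative arc); the only forbidden labeling is the all-$\labminus$ one, realizing the reverse orientation. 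So a directed cycle block contributes $2^{\ell}-1$, not $2^{\ell}-2$. Dually, if the block does \emph{not} form a directed cycle, then each of the two ring orientations disagrees with at least one arc of $G$, so each of the two labelings realizing them carries at least one $\labminus$ label, and both are forbidden: the contribution is $2^{\ell}-2$, not $2^{\ell}-1$. Your own intermediate observation (``exactly the labelings yielding a consistent cyclic orientation \emph{with at least one reversed arc} are forbidden'') already implies the correct assignment; you then applied it backwards. Note also that your factors do not match Figure~\ref{fig:cactus}, contrary to what you claim: there the dotted (directed) cycles of lengths $3$ and $5$ contribute $(2^3-1)(2^5-1)$ and the solid (non-directed) cycle of length $4$ contributes $(2^4-2)$, whereas your rule would give $(2^3-2)(2^5-2)(2^4-1)$.

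Everything else stands: the decoupling argument (each arc lies in exactly one block, every simple cycle of $\un{G}$ lies in a single block, hence a labeling is valid iff its restriction to each block is valid), the graft count, the linear-time skeleton computation, and the running-time accounting are all correct and coincide with the paper's proof. Swapping the two cycle factors back yields precisely the paper's formula and completes the argument.
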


\begin{proof}
  The result is obtained from the skeleton of an oriented cactus $G$,
  since potential forbidden cycles are limited to within blocks of the skeleton.
  From this independence, any union of valid labelings on blocks is valid,
  and we have the product
  $$
    \SUD{G}=
    \prod_{H \in \mathcal G} 2^{|H|}
    \prod_{H \in \vec{\mathcal C}} (2^{|H|}-1)
    \prod_{H \in \mathcal C} (2^{|H|}-2)
  $$
  where $\mathcal G$ is the set of grafts of $G$,
  $\vec{\mathcal C}$ is the set of cycles forming directed cycles,
  $\mathcal C$ is the set of cycles not forming directed cycles,
  and $|H|$ is the number of arcs in block $H$.
  Indeed, grafts cannot create forbidden cycles hence any
  $\{\labplus,\labminus\}$-labeling will be valid,
  cycles forming a directed cycle can create exactly one forbidden cycle
  (with $\labminus$ labels on all arcs),
  and cycles not forming a directed cycle can create exactly two forbidden cycles
  (one for each possible direction of the cycle).
  In a first step the skeleton of a cactus can be computed in linear
  time~\cite{bk98}. Then, since the size $n$ of the input is
  equal (up to a constant) to the number of arcs, the size of the output contains
  $\O(n)$ bits (upper bounded by the number of
  $\{\labplus,\labminus\}$-labelings), thus naively we have $\O(n)$ terms,
  each of $\O(n)$ bits, and the $\O(n \log n \log\log n)$
  Sch\"onhage–Strassen integer multiplication algorithm~\cite{ss71}
  gives the result.
\end{proof}

\begin{remark}
  \label{remark:mult}
  Assuming multiplications to be done in constant time in the above result
  would be misleading, because we are multiplying integers having a number of
  digits in the magnitude of the input size. Also, the result may be slightly
  strengthened by considering the $\O(n \log n~2^{2 \log^* n})$
  algorithm by F\"urer in 2007~\cite{f07}, and maybe
  more efficient integer multiplication algorithms in the future, such as the
  $\O(n \log n)$ algorithm recently claimed~\cite{hvdh19}.
\end{remark}

\section{Series-parallel decomposition method}
\label{s:decomp}

In this section we present a divide and conquer method in order to solve {\bf
\#UD}, {\em i.e.} in order to count the number of valid labelings (update
digraphs) of a given digraph.
What will be essential in this decomposition method is not the orientation of arcs,
but rather the topology of the underlying undirected (multi)graph $\un{G}$.
The (de)composition is based on defining two endpoints on our digraphs,
and composing them at their endpoints.
It turns out to be closely related to series-parallel graphs
first formalized to model electric networks in 1892~\cite{mm92}.
In Subsection~\ref{ss:comp} we present the operations of composition,
and in Subsection~\ref{ss:sp} we show how it applies to
the family of oriented series-parallel graphs.

\subsection{Sequential, parallel, and free compositions}
\label{ss:comp}

Let us first introduce some notations and terminology on the characterization
of valid labelings provided by Theorem~\ref{theorem:lab_valid}.
Given $\lab:A\to\{\labplus,\labminus\}$, we denote
$\revneg{G}_{\lab}=(V,\revneg{A})$ the multidigraph obtained by reversing the
orientation of negative arcs:
$$
  (i,j) \in \revneg{A} \quad\iff\quad \begin{array}{r}
    (i,j) \in A \text{ and } \lab((i,j))=\labplus,\\
    \text{or } (j,i) \in A \text{ and } \lab((j,i))=\labminus.
  \end{array}
$$
For simplicity we abuse the notation and still denote $\lab$ the labeling of
the arcs of $\revneg{G}_{\lab}$ (arcs keep their label from $G$ to
$\revneg{G}_{\lab}$). From Theorem~\ref{theorem:lab_valid}, $\lab$ is a valid
labeling if and only if $\revneg{G}_{\lab}$ does not contain any cycle with at
least one arc labeled $\labminus$, called {\em forbidden cycle}
(it may contain cycles with all arcs labeled $\labplus$).
A path from $i$ to $j$ in $\revneg{G}_{\lab}$ is called {\em
negative} if it contains at least one arc labeled $\labminus$, and {\em
positive} otherwise.

\begin{definition}
  A {\em source-sink labeled graph (ss-graph)} $(G,\alpha,\beta)$
  is a multigraph $G$ with two distinguished vertices $\alpha \neq \beta$.
  A triple $(G,\alpha,\beta)$ with $G$ a digraph
  such that $(\un{G},\alpha,\beta)$ is a ss-graph,
  is called an {\em oriented ss-graph (oss-graph)}.
\end{definition}

We can decompose the set of update digraphs
(denoted $\UD{G}=\{ \lab:A\to\{\labplus,\labminus\} \mid \lab
\text{ is valid}\}$) into an oss-graph $(G,\alpha,\beta)$,
based on the follow sets.
$$\begin{array}{rl}
  \UDto{G}{\alpha}{\beta}{+} =\{ \lab \in \UD{G} \mid
    &\text{there {\bf exists} a path from } \alpha \text{ to } \beta \text{ in } \revneg{G}_{\lab},\\
    &\text{and {\bf all} paths from } \alpha \text{ to } \beta \text{ in } \revneg{G}_{\lab} \text{ are {\bf positive}}\}\\[.5em]
  \UDto{G}{\alpha}{\beta}{-} =\{ \lab \in \UD{G} \mid
    &\text{there {\bf exists} a {\bf negative} path from } \alpha \text{ to } \beta \text{ in } \revneg{G}_{\lab}\}\\[.5em]
  \UDto{G}{\alpha}{\beta}{\varnothing} =\{ \lab \in \UD{G} \mid
    &\text{there exist {\bf no path} from } \alpha \text{ to } \beta \text{ in } \revneg{G}_{\lab}\}
\end{array}$$
We define analogously $\UDto{G}{\beta}{\alpha}{+}$, $\UDto{G}{\beta}{\alpha}{-}$,
$\UDto{G}{\beta}{\alpha}{\varnothing}$, and partition $\UD{G}$ as:
\begin{multicols}{2}
  \begin{enumerate}
    \item $\UDp{G}{\alpha,\beta}{+,+}=
      \UDto{G}{\alpha}{\beta}{+} \cap \UDto{G}{\beta}{\alpha}{+}$
    \item $\UDp{G}{\alpha,\beta}{+,\varnothing}=
      \UDto{G}{\alpha}{\beta}{+} \cap \UDto{G}{\beta}{\alpha}{\varnothing}$
    \item $\UDp{G}{\alpha,\beta}{-,\varnothing}=
      \UDto{G}{\alpha}{\beta}{-} \cap \UDto{G}{\beta}{\alpha}{\varnothing}$
    \item $\UDp{G}{\alpha,\beta}{\varnothing,+}=
      \UDto{G}{\alpha}{\beta}{\varnothing} \cap \UDto{G}{\beta}{\alpha}{+}$
    \item $\UDp{G}{\alpha,\beta}{\varnothing,-}=
      \UDto{G}{\alpha}{\beta}{\varnothing} \cap \UDto{G}{\beta}{\alpha}{-}$
    \item $\UDp{G}{\alpha,\beta}{\varnothing,\varnothing}=
      \UDto{G}{\alpha}{\beta}{\varnothing} \cap \UDto{G}{\beta}{\alpha}{\varnothing}$
  \end{enumerate}
\end{multicols}
\noindent Notice that the three missing combinations, $\UDp{G}{\alpha,\beta}{+,-}$,
$\UDp{G}{\alpha,\beta}{-,+}$ and $\UDp{G}{\alpha,\beta}{-,-}$, would always be
empty because such labelings contain a forbidden cycle.
For convenience let us denote
$S=\{(+,+),(+,\varnothing),(-,\varnothing),(\varnothing,+),(\varnothing,-),(\varnothing,\varnothing)\}$.
Given any oss-graph
$(G,\alpha,\beta)$ we have
\begin{align}
  \label{eq:partition}
  \SUD{G}=\sum_{(s,t) \in S} \SUDp{G}{\alpha,\beta}{s,t}
\end{align}
where $\SUDp{G}{\alpha,\beta}{s,t}=|\UDp{G}{\alpha,\beta}{s,t}|$. Oss-graphs may
be thought as black boxes, we will compose them using the values of
$\SUDp{G}{\alpha,\beta}{s,t}$, regardless of their inner topologies.

\begin{definition}
  \label{def:comp}
  We define three types of compositions (see Figure~\ref{fig:comp}).
  \begin{itemize}
    \item The {\em series composition} of two oss-graphs
      $(G,\alpha,\beta)$ and $(G',\alpha',\beta')$
      with $V \cap V' = \emptyset$, is the oss-graph
      $\seqcomp{(G,\alpha,\beta)}{(G',\alpha',\beta')}=(D,\alpha,\beta')$ with
      $D$ the one-point join of $G$ and $G'$ identifying components
      $\beta,\alpha'$ as one single component.
    \item The {\em parallel composition} of two oss-graphs
      $(G,\alpha,\beta)$ and $(G',\alpha',\beta')$
      with $V \cap V' = \emptyset$, is the oss-graph
      $\parcomp{(G,\alpha,\beta)}{(G',\alpha',\beta')}=(D,\alpha,\beta)$ with
      $D$ the two-points join of $G$ and $G'$ identifying components
      $\alpha,\alpha'$ and $\beta,\beta'$ as two single components.
    \item The {\em free composition at $v,v'$} of an oss-graph
      $(G=(V,A),\alpha,\beta)$ and a digraph $G'=(V',A')$ with
      $V \cap V'=\emptyset$, $v \in V$, $v' \in V'$, is the oss-graph
      $\freecomp{(G,\alpha,\beta)}{G'}=(D,\alpha,\beta)$ with
      $D$ the one-point join of $G$ and $G'$ identifying 
      $v,v'$ as one single component.
  \end{itemize}
\end{definition}
\noindent Remark that the three types of compositions from Definition~\ref{def:comp}
also apply to (undirected) ss-graph $(\un{G},\alpha,\beta)$.
Series and free compositions differ on the endpoints of the
obtained oss-graph, which has important consequences on counting the number of
update digraphs, as stated in the following results.
We will see in Theorem~\ref{theorem:spf} from Subsection~\ref{ss:sp}
that both series and free compositions are needed in order to decompose
the family of (general) {\em oriented series-parallel graphs}
(to be defined).

\begin{figure}
  \centerline{
    \includegraphics[scale=.7]{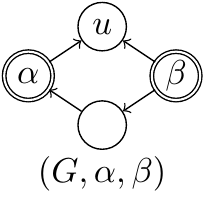}~
    \includegraphics[scale=.7]{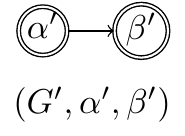}\qquad
    \includegraphics[scale=.7]{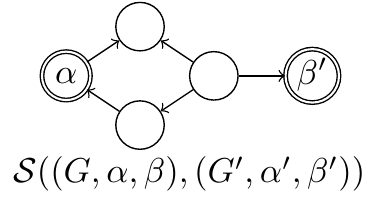}\quad
    \includegraphics[scale=.7]{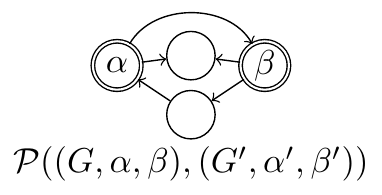}\quad
    \includegraphics[scale=.7]{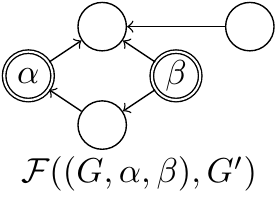}
  }
  \caption{Example of series and parallel compositions, and a free composition at $u,\beta'$.}
  \label{fig:comp}
\end{figure}

\begin{lemma}
  \label{lemma:seqcomp}
  For $(D,\alpha,\beta')=\seqcomp{(G,\alpha,\beta)}{(G',\alpha',\beta')}$,
  the values of $\SUDp{D}{\alpha,\beta'}{s,t}$ for all $(s,t) \in S$ can be
  computed in time $\O(n \log n \log\log n)$ (with $n$ the binary length of the values)
  from the values of $\SUDp{G}{\alpha,\beta}{s,t}$ and $\SUDp{G'}{\alpha',\beta'}{s,t}$
  for all $(s,t) \in S$.
\end{lemma}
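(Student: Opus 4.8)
The plan is to exploit that the identified vertex, call it $m=\beta=\alpha'$, is a cut vertex of $D$. First I would observe that the arc set of $D$ is the disjoint union of those of $G$ and $G'$, so a labeling of $D$ is \emph{exactly} a pair $(\lab_G,\lab_{G'})$ of labelings of $G$ and $G'$. Moreover, since every simple cycle of $\revneg{D}_{\lab}$ visits $m$ at most once, each such cycle lies entirely in $\revneg{G}_{\lab_G}$ or entirely in $\revneg{G'}_{\lab_{G'}}$; hence by Theorem~\ref{theorem:lab_valid} the labeling $\lab$ is valid on $D$ if and only if both $\lab_G$ and $\lab_{G'}$ are valid. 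This yields a bijection $\UD{D}\cong\UD{G}\times\UD{G'}$, so it only remains to understand how the type $(s,t)\in S$ transforms under this bijection.

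Next I would track the type. Because $m$ separates $\alpha$ from $\beta'$, every path from $\alpha$ to $\beta'$ in $\revneg{D}_{\lab}$ decomposes uniquely, at its single passage through $m$, into a path from $\alpha$ to $\beta$ in $\revneg{G}_{\lab_G}$ followed by a path from $\alpha'$ to $\beta'$ in $\revneg{G'}_{\lab_{G'}}$; conversely any two such half-paths concatenate into an $\alpha$-to-$\beta'$ path, and the concatenation is positive precisely when both halves are. A short case analysis then shows that the forward symbol of $\lab$ on $(D,\alpha,\beta')$ equals $\min(s_G,s_{G'})$ under the order $\varnothing<-<+$ (no path if either half is blocked; all paths positive only when both halves have symbol $+$; a negative path otherwise), where $(s_G,t_G)$ and $(s_{G'},t_{G'})$ are the types of $\lab_G$ and $\lab_{G'}$. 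By the symmetric argument on $\beta'$-to-$\alpha$ paths, the backward symbol is $\min(t_G,t_{G'})$.

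From this combination rule I would read off the table. One first checks that combining two valid types always lands back in $S$, so the forbidden combinations $(+,-),(-,+),(-,-)$ never arise (for instance $\min(s_G,s_{G'})=-$ forces some component to have forward symbol $-$, whence --- being a valid type --- its backward symbol is $\varnothing$, so the combined backward symbol is $\varnothing$). Consequently, for each $(s,t)\in S$,
\[
  \SUDp{D}{\alpha,\beta'}{s,t}=\sum_{\substack{(s_G,t_G),(s_{G'},t_{G'})\in S\\ \min(s_G,s_{G'})=s,\ \min(t_G,t_{G'})=t}} \SUDp{G}{\alpha,\beta}{s_G,t_G}\cdot\SUDp{G'}{\alpha',\beta'}{s_{G'},t_{G'}}.
\]
This is a constant number (at most $36$) of products and sums of integers of binary length $\O(n)$, so using the Sch\"onhage--Strassen multiplication the whole table is produced in $\O(n\log n\log\log n)$ time, as claimed.

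The hard part will be the validity equivalence of the first step, namely carefully justifying that no forbidden cycle can straddle the cut vertex $m$ (so that validity on $D$ factors through validity on $G$ and $G'$), together with the ``all paths positive'' direction of the type-combination rule, where one must reason over \emph{all} concatenations of half-paths rather than exhibit a single witnessing path.
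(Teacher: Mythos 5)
Your proof is correct and takes essentially the same approach as the paper's: you decompose each labeling of $D$ into a pair of valid labelings of $G$ and $G'$ (validity factorizing through the cut vertex $m=\beta=\alpha'$), determine how the six types combine, and sum the corresponding constant number of products using Sch\"onhage--Strassen multiplication. Your $\min$-rule under the order $\varnothing < - < +$ is a compact encoding that reproduces all $36$ cells of the paper's Table~\ref{table:seqcomp}, and your explicit cut-vertex justification of why no forbidden cycle can straddle the join vertex makes rigorous a point the paper leaves implicit.
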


\begin{proof}
  The result is obtained by considering the $36$ couples
  of some $\UDp{G}{\alpha,\beta}{s,t}$ with $(s,t) \in S$
  and some $\UDp{G'}{\alpha',\beta'}{s',t'}$ with $(s',t')$.
  For example, when we take the union of any labeling $\lab$ in
  $\UDp{G}{\alpha,\beta}{+,\varnothing}$
  (in $\revneg{G}_{\lab}$ there is at least one path from $\alpha$ to $\beta$,
  all paths from $\alpha$ to $\beta$ positive,
  and no path from $\beta$ to $\alpha$)
  and any labeling $\lab'$ in
  $\UDp{G'}{\alpha',\beta'}{-,\varnothing}$
  (in $\revneg{G'}_{\lab'}$ there is a negative path from $\alpha'$ to $\beta'$,
  and no path from $\beta'$ to $\alpha'$),
  we obtain a valid labeling of $D$ with at least one negative path from $\alpha$ to $\beta'$,
  and no path from $\beta'$ to $\alpha$, hence an element of
  $\UDp{D}{\alpha,\beta'}{-,\varnothing}$. We can deduce that
  $\UDp{D}{\alpha,\beta'}{-,\varnothing}$ contains these
  $\SUDp{G}{\alpha,\beta}{+,\varnothing}\SUDp{G'}{\alpha',\beta'}{-,\varnothing}$
  labelings.
  The results of this simple reasoning in all $36$ cases is presented
  on Table~\ref{table:seqcomp}.
  \begin{table}
    \centering
    \begin{tabular}{ | c | c | c | c | c | c | c | c | c | c | }
      \hline
      $\mathcal S$ & \cellcolor[gray]{\tr} $+,+$ & \cellcolor[gray]{\tr} $+,\varnothing$ & \cellcolor[gray]{\tr} $-,\varnothing$ & \cellcolor[gray]{\tr} $\varnothing,+$ & \cellcolor[gray]{\tr} $\varnothing,-$ & \cellcolor[gray]{\tr} $\varnothing,\varnothing$ \\
      \hline 
      \cellcolor[gray]{\tr} $+,+$ & $+,+$ &  $+,\varnothing$ & $-,\varnothing$ &  $\varnothing,+$ & $\varnothing,-$ & $\varnothing,\varnothing$ \\
      \hline 
      \cellcolor[gray]{\tr} $+,\varnothing$ & $+,\varnothing$ &  $+,\varnothing$  & $-,\varnothing$ &  $\varnothing,\varnothing$ & $\varnothing,\varnothing$ & $\varnothing,\varnothing$ \\
      \hline 
      \cellcolor[gray]{\tr} $-,\varnothing$ & $-,\varnothing$  &  $-,\varnothing$ & $-,\varnothing$ &  $\varnothing,\varnothing$ & $\varnothing,\varnothing$ & $\varnothing,\varnothing$ \\
      \hline 
      \cellcolor[gray]{\tr} $\varnothing,+$ & $\varnothing,+$  &  $\varnothing,\varnothing$ & $\varnothing,\varnothing$ &  $\varnothing,+$ & $\varnothing,-$ & $\varnothing,\varnothing$ \\
      \hline 
      \cellcolor[gray]{\tr} $\varnothing,-$ & $\varnothing,-$  &  $\varnothing,\varnothing$  & $\varnothing,\varnothing$ &  $\varnothing,-$ & $\varnothing,-$ & $\varnothing,\varnothing$ \\
      \hline 
      \cellcolor[gray]{\tr} $\varnothing,\varnothing$ & $\varnothing,\varnothing$  &  $\varnothing,\varnothing$ & $\varnothing,\varnothing$ &  $\varnothing,\varnothing$ & $\varnothing,\varnothing$ & $\varnothing,\varnothing$ \\
      \hline
    \end{tabular}
    \caption{
      Row $(s,t)$ corresponds to $\UDp{G}{\alpha,\beta}{s,t}$,
      column $(s',t')$ corresponds to $\UDp{G'}{\alpha',\beta'}{s',t'}$,
      and cell content $(s'',t'')$ indicates that
      the union of any labeling in $\UDp{G}{\alpha,\beta}{s,t}$
      and any labeling in $\UDp{G'}{\alpha',\beta'}{s',t'}$
      gives a labeling in $\UDp{D}{\alpha,\beta'}{s'',t''}$ for
      $(D,\alpha,\beta')=\seqcomp{(G,\alpha,\beta)}{(G',\alpha',\beta')}$.
    }
    \label{table:seqcomp}
  \end{table}
  This allows to compute each value $\SUDp{D}{\alpha,\beta'}{s'',t''}$
  as the finite sum, for each line $(s,t)$ and column $(s',t')$
  where $(s'',t'')$ appears, of the term
  $\SUDp{G}{\alpha,\beta}{s,t}\SUDp{G'}{\alpha',\beta'}{s',t'}$.
  As an example we have $\SUDp{D}{\alpha,\beta'}{+,\varnothing}=
  \SUDp{G}{\alpha,\beta}{+,+}\SUDp{G'}{\alpha',\beta'}{+,\varnothing}+
  \SUDp{G}{\alpha,\beta}{+,\varnothing}\SUDp{G'}{\alpha',\beta'}{+,+}+
  \SUDp{G}{\alpha,\beta}{+,\varnothing}\SUDp{G'}{\alpha',\beta'}{+,\varnothing}$.
  Summations on $n$ bits are performed in linear time with schoolbook
  algorithm, and multiplications on $n$ bits are performed in time
  $\O(n \log n \log\log n)$ from Sch\"onhage–Strassen algorithm~\cite{ss71}.
\end{proof}

\begin{lemma}
  \label{lemma:parcomp}
  For $(D,\alpha,\beta)=\parcomp{(G,\alpha,\beta)}{(G',\alpha',\beta')}$,
  the values of $\SUDp{D}{\alpha,\beta}{s,t}$ for all $(s,t) \in S$ can be
  computed in time $\O(n \log n \log\log n)$ (with $n$ the binary length of the values)
  from the values of $\SUDp{G}{\alpha,\beta}{s,t}$ and $\SUDp{G'}{\alpha',\beta'}{s,t}$
  for all $(s,t) \in S$.
\end{lemma}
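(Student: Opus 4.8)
The plan is to follow the template of Lemma~\ref{lemma:seqcomp}. After the two-points join identifying $\alpha$ with $\alpha'$ and $\beta$ with $\beta'$, the arc sets of $G$ and $G'$ remain disjoint, so every labeling of $D$ is the union of a labeling $\lab$ of $G$ and a labeling $\lab'$ of $G'$, and $\revneg{D}_{\lab}$ is exactly $\revneg{G}_{\lab}$ and $\revneg{G'}_{\lab'}$ glued at the two vertices $\alpha$ and $\beta$. First I would record the structural fact that $\{\alpha,\beta\}$ separates the internal vertices of $G$ from those of $G'$: consequently every simple directed path from $\alpha$ to $\beta$, and every one from $\beta$ to $\alpha$, lies entirely on one side, and every simple directed cycle of $\revneg{D}_{\lab}$ is either internal to one side or passes through both $\alpha$ and $\beta$.

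From this separation I would read off the two ingredients needed to fill the table. For validity: $\lab \cup \lab'$ is valid if and only if $\lab$ and $\lab'$ are valid and $\revneg{D}_{\lab}$ contains no \emph{mixed} forbidden cycle, that is, no cycle formed by an $\alpha \to \beta$ path on one side concatenated with a $\beta \to \alpha$ path on the other carrying at least one arc labeled $\labminus$. In terms of the class indices, the couple of a labeling in $\UDp{G}{\alpha,\beta}{s,t}$ with one in $\UDp{G'}{\alpha',\beta'}{s',t'}$ yields no valid labeling exactly when one side has a path in a given direction while the other has a path in the opposite direction and at least one of the two is negative; such couples contribute $0$. For the resulting class of a valid union: a directed $\alpha \to \beta$ path exists in $D$ iff it exists in $G$ or in $G'$, a negative one exists iff $s = {-}$ or $s' = {-}$, and otherwise all are positive; hence the first index is $s'' = \max(s,s')$ for the total order $\varnothing < {+} < {-}$, and symmetrically $t'' = \max(t,t')$.

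I would then assemble the $6 \times 6$ table analogous to Table~\ref{table:seqcomp}, each cell being either empty (forbidden mixed cycle, hence invalid) or the class $(s'',t'')$ above. The qualitative novelty with respect to the sequential case is precisely these mixed cycles through the two shared endpoints --- in the sequential join the single shared vertex is a cut vertex, so no new cycle ever appears --- and among them the all-$\labplus$ mixed cycles are harmless: for instance combining $\UDp{G}{\alpha,\beta}{+,\varnothing}$ with $\UDp{G'}{\alpha',\beta'}{\varnothing,+}$ stays valid and produces class $(+,+)$ although neither side had paths in both directions. A short check shows that for valid unions the pair $(s'',t'')$ never equals $(+,-)$, $(-,+)$ or $(-,-)$, since each of these would force a forbidden cycle (mixed or internal); so every non-empty cell lies in $S$, as required by the partition~\eqref{eq:partition}. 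Finally each $\SUDp{D}{\alpha,\beta}{s'',t''}$ is the sum over the cells labeled $(s'',t'')$ of the products $\SUDp{G}{\alpha,\beta}{s,t}\,\SUDp{G'}{\alpha',\beta'}{s',t'}$, i.e. a constant number of additions and multiplications on $n$-bit integers, which gives the $\O(n \log n \log\log n)$ bound via Sch\"onhage--Strassen~\cite{ss71}, exactly as in Lemma~\ref{lemma:seqcomp}.

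The main obstacle will be the per-cell validity analysis: unlike the sequential composition, here the double identification genuinely creates cycles spanning both graphs, so for each of the $36$ couples I must decide whether a forbidden mixed cycle is \emph{forced}, which requires tracking both the direction of each constituent path and the presence of a $\labminus$ arc (and noticing that two paths lying in the \emph{same} graph are already governed by the individual validity of that side). Once this table is verified, the sign join and the complexity count are routine.
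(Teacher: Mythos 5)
Your proposal is correct and follows essentially the same route as the paper: enumerate the $36$ couples $\UDp{G}{\alpha,\beta}{s,t} \times \UDp{G'}{\alpha',\beta'}{s',t'}$, decide for each whether the union is valid (the only new obstruction being forbidden cycles formed by an $\alpha\to\beta$ path on one side and a $\beta\to\alpha$ path on the other, since $\{\alpha,\beta\}$ separates the two sides) and which class it lands in, then sum the corresponding products using Sch\"onhage--Strassen. Your two general rules --- invalidity exactly when opposite-direction paths exist on the two sides with at least one negative, and otherwise $s''=\max(s,s')$, $t''=\max(t,t')$ for the order $\varnothing < {+} < {-}$ --- reproduce Table~\ref{table:parcomp} cell by cell, so the only difference is that you derive the table from a uniform criterion rather than listing it.
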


\begin{proof}
  The proof is analogous to Lemma~\ref{lemma:seqcomp}, with Table~\ref{table:parcomp}.
  Remark that in this case, it is possible to create invalid labelings for $D$
  (containing some forbidden cycle) from the union of two valid labelings for
  $G$ and $G'$.
  For example, when we take the union of any labeling $\lab$ in
  $\UDp{G}{\alpha,\beta}{+,+}$
  (in $\revneg{G}_{\lab}$ there is at least one path from $\alpha$ to $\beta$,
  all paths from $\alpha$ to $\beta$ positive,
  and the same from $\beta$ to $\alpha$)
  and any labeling $\lab'$ in
  $\UDp{G'}{\alpha',\beta'}{-,\varnothing}$
  (in $\revneg{G'}_{\lab'}$ there is a negative path from $\alpha'$ to $\beta'$,
  and no path from $\beta'$ to $\alpha'$),
  we obtain an invalid labeling of $D$ because the concatenation of
  a positive path from $\alpha$ to $\beta$ with a negative path from
  $\beta'$ to $\alpha'$ gives a forbidden cycle in $D$
  (recall that $\alpha,\alpha'$ and $\beta,\beta'$ are identified).
  \begin{table}
    \centering
    \begin{tabular}{ | c | c | c | c | c | c | c | c | c | c | }
      \hline
      $\mathcal P$ & \cellcolor[gray]{\tr} $+,+$ & \cellcolor[gray]{\tr} $+,\varnothing$ & \cellcolor[gray]{\tr} $-,\varnothing$ & \cellcolor[gray]{\tr} $\varnothing,+$ & \cellcolor[gray]{\tr} $\varnothing,-$ & \cellcolor[gray]{\tr} $\varnothing,\varnothing$ \\
      \hline 
      \cellcolor[gray]{\tr} $+,+$ & $+,+$ & $+,+$ & & $+,+$ & & $+,+$ \\
      \hline 
      \cellcolor[gray]{\tr} $+,\varnothing$ & $+,+$ & $+,\varnothing$ & $-,\varnothing$ & $+,+$ & & $+,\varnothing$ \\
      \hline 
      \cellcolor[gray]{\tr} $-,\varnothing$ & & $-,\varnothing$ & $-,\varnothing$ & & & $-,\varnothing$ \\
      \hline 
      \cellcolor[gray]{\tr} $\varnothing,+$ & $+,+$ & $+,+$ & & $\varnothing,+$ & $\varnothing,-$ & $\varnothing,+$ \\
      \hline 
      \cellcolor[gray]{\tr} $\varnothing,-$ & & & & $\varnothing,-$ & $\varnothing,-$ & $\varnothing,-$ \\
      \hline 
      \cellcolor[gray]{\tr} $\varnothing,\varnothing$ & $+,+$ & $+,\varnothing$ & $-,\varnothing$ & $\varnothing,+$ & $\varnothing,-$ & $\varnothing,\varnothing$ \\
      \hline
    \end{tabular}
    \caption{
      Row $(s,t)$ corresponds to $\UDp{G}{\alpha,\beta}{s,t}$,
      column $(s',t')$ corresponds to $\UDp{G'}{\alpha',\beta'}{s',t'}$,
      and cell content $(s'',t'')$ indicates that
      the union of any labeling in $\UDp{G}{\alpha,\beta}{s,t}$
      and any labeling in $\UDp{G'}{\alpha',\beta'}{s',t'}$
      gives a labeling in $\UDp{D}{\alpha,\beta}{s'',t''}$ for
      $(D,\alpha,\beta)=\parcomp{(G,\alpha,\beta)}{(G',\alpha',\beta')}$.
      Empty cells mean that unions give invalid labelings.
    }
    \label{table:parcomp}
  \end{table}
\end{proof}

Note that Remark~\ref{remark:mult} also applies to Lemmas~\ref{lemma:seqcomp}
and~\ref{lemma:parcomp}.
For the free composition the count is easier.

\begin{lemma}
  \label{lemma:freecomp}
  For $(D,\alpha,\beta)=\freecomp{(G,\alpha,\beta)}{G'}$,
  we have $\SUDp{D}{\alpha,\beta}{s,t}=\SUDp{G}{\alpha,\beta}{s,t}\SUD{G'}$
  for all $(s,t) \in S$.
\end{lemma}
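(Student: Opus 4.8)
The plan is to exploit that the free composition attaches $G'$ to $G$ at the \emph{single} identified vertex $v=v'$, so that --- unlike the parallel composition, where the two shared endpoints $\alpha,\beta$ couple the sides and force the case analysis of Table~\ref{table:parcomp} --- neither validity nor the $\alpha$--$\beta$ path structure of $D$ can mix information between $G$ and $G'$. Since $V\cap V'=\emptyset$, the arc set of $D$ is the disjoint union $A\cup A'$, so every labeling $\lab$ of $D$ is determined by the pair of its restrictions $\lab_G$ to $A$ and $\lab_{G'}$ to $A'$. The lemma will then follow once I establish that $\lab\in\UDp{D}{\alpha,\beta}{s,t}$ holds if and only if $\lab_G\in\UDp{G}{\alpha,\beta}{s,t}$ and $\lab_{G'}\in\UD{G'}$: this exhibits a bijection $\UDp{D}{\alpha,\beta}{s,t}\cong\UDp{G}{\alpha,\beta}{s,t}\times\UD{G'}$, whence $\SUDp{D}{\alpha,\beta}{s,t}=\SUDp{G}{\alpha,\beta}{s,t}\,\SUD{G'}$.

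First I would record the structural fact underlying everything: reversing negative arcs leaves the underlying undirected graph unchanged, so $v$ is a cut vertex of $\revneg{D}_{\lab}$ whose removal separates the vertices coming from $G$ from those coming from $G'$. Consequently every \emph{simple} cycle of $\revneg{D}_{\lab}$ lies entirely inside the $G$-part or entirely inside the $G'$-part, since a simple cycle meets $v$ at most once and therefore cannot cross to the other side and return. The same one-visit argument shows that every simple directed path from $\alpha$ to $\beta$ (and from $\beta$ to $\alpha$) stays inside the $G$-part, because $\alpha,\beta$ both lie in $G$ and leaving into $G'$ would require using $v$ twice.

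From here the two required equivalences are short. For validity, Theorem~\ref{theorem:lab_valid} characterizes $\lab$ as valid exactly when $\revneg{D}_{\lab}$ has no forbidden cycle; a closed walk carrying a $\labminus$ arc always contains a simple cycle carrying a $\labminus$ arc, so it suffices to consider simple cycles, and by the separation above these are forbidden in $D$ iff one is forbidden in $\revneg{G}_{\lab_G}$ or in $\revneg{G'}_{\lab_{G'}}$. Hence $\lab$ is valid for $D$ iff $\lab_G$ is valid for $G$ and $\lab_{G'}$ is valid for $G'$. For the superscript, the simple $\alpha$--$\beta$ (resp. $\beta$--$\alpha$) paths of $\revneg{D}_{\lab}$ are exactly those of $\revneg{G}_{\lab_G}$, with identical labels; therefore existence of such a path and its positivity/negativity --- that is, the type $(s,t)$ --- are read off $\lab_G$ alone, giving $\lab\in\UDp{D}{\alpha,\beta}{s,t}\iff\lab_G\in\UDp{G}{\alpha,\beta}{s,t}$ (with $\lab_{G'}$ free among valid labelings of $G'$).

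I expect the only point needing care to be the reduction to \emph{simple} cycles and paths. If one reads ``path'' as allowing walks, a path from $\alpha$ to $\beta$ could in principle detour into $G'$ through $v$ and back; the clean resolution is that for a \emph{valid} $\lab$ any such closed excursion at $v$ inside $G'$ is a cycle of $\revneg{D}_{\lab}$ and hence, being forbidden otherwise, must use only $\labplus$ arcs, so it can neither create new $\alpha$--$\beta$ reachability nor turn a positive connection negative. Thus the classification is unchanged whether or not detours are allowed, and the single-cut-vertex topology of the free composition is exactly what collapses the count to the plain product $\SUDp{G}{\alpha,\beta}{s,t}\,\SUD{G'}$, with no $36$-case table as in Lemmas~\ref{lemma:seqcomp} and~\ref{lemma:parcomp}.
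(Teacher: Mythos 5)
Your proof is correct and takes essentially the same approach as the paper's (much terser) argument: the single identified vertex prevents any forbidden cycle or $\alpha$--$\beta$ path from crossing between the two sides, so the union of a valid labeling of $G$ with a valid labeling of $G'$ is always valid and the $(s,t)$ class is inherited from $G$ alone, giving the product formula. Your cut-vertex argument, the reduction to simple cycles, and the walk-versus-simple-path discussion just make explicit what the paper leaves implicit.
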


\begin{proof}
  The endpoints of the oss-graph $(D,\alpha,\beta)$ are the endpoints of the
  oss-graph $(G,\alpha,\beta)$, and it is not possible to create a forbidden cycle
  in the union of a valid labeling on $G$ and a valid labeling of $G'$,
  therefore the union is always a valid labeling of $D$, each one belonging to
  the part $(s,t)$ of $(D,\alpha,\beta)$
  corresponding to the part $(s,t) \in S$ of $(G,\alpha,\beta)$.
\end{proof}

\subsection{Application to oriented series-parallel graphs}
\label{ss:sp}

The series and parallel compositions of Definition~\ref{def:comp}
correspond exactly to the class of two-terminal series-parallel graphs
from~\cite{a61,rs42,vtl79}.

\begin{definition}
  \label{def:ttsp}
  A ss-graph $(G,\alpha,\beta)$ is {\em two-terminal series-parallel (a ttsp-graph)}
  if and only if one the following holds.
  \begin{itemize}
    \item $(G,\alpha,\beta)$ is a {\em base ss-graph}
      with two vertices $\alpha,\beta$ and one edge $\{\alpha,\beta\}$.
    \item $(G,\alpha,\beta)$ is obtained by a series or parallel
      composition\footnote{With Definition~\ref{def:comp} applied to (undirected) ss-graphs.}
      of two ttsp-graphs.
  \end{itemize}
  In this case $G$ alone is called a {\em blind} ttsp-graph.
\end{definition}
%
%

Adding the free composition allows to go from two-terminal series-parallel graphs
to (general) series-parallel graphs~\cite{d65,vtl79}.
More precisely, it allows exactly to add tree structures to ttsp-graphs, as we argue now
(ttsp-graphs do not contain arbitrary trees, its only acyclic graphs being simple paths;
{\em e.g.} one cannot build a {\em claw} from Definition~\ref{def:ttsp}).

\begin{definition}
  \label{def:sp}
  A multigraph $G$ is {\em series-parallel (sp-graph)} if and only if
  all its 2-connected components are blind ttsp-graphs.
  A digraph $G$ such that $\un{G}$ is an sp-graph,
  is called an {\em oriented sp-graph (osp-graph)}.
\end{definition}

The family of sp-graphs corresponds to the multigraphs obtained by series, parallel
and free compositions from base ss-graphs.
See Figure~\ref{fig:sp} for an example $G$
of osp-graph illustrating Theorem~\ref{theorem:spf},
and the decomposition method to compute $\SUD{G}$ (Theorem~\ref{theorem:osp}).

\begin{figure}
  \centerline{
    \raisebox{1cm}{\includegraphics{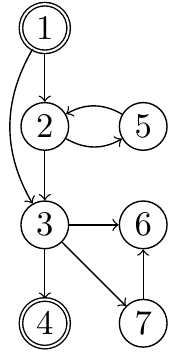}}
    \hspace*{1.5cm}
    \includegraphics{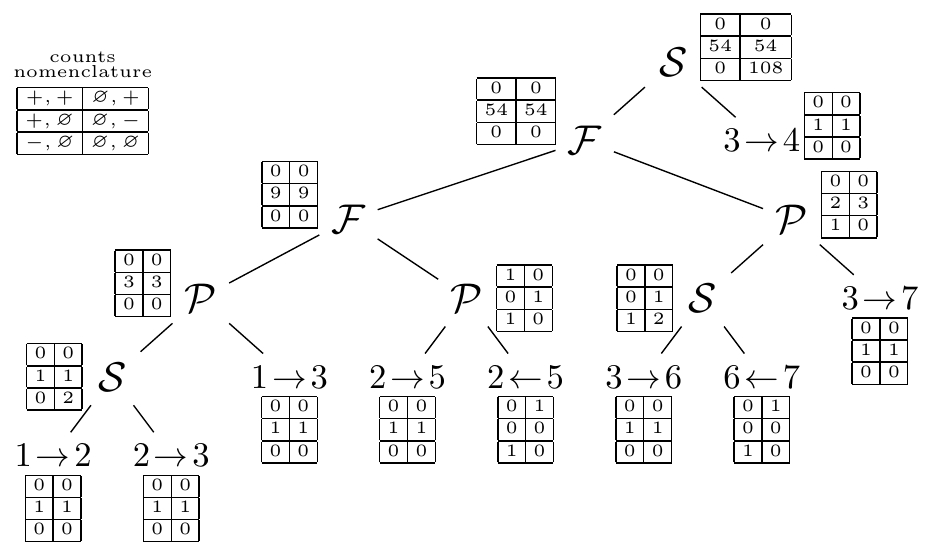}
  }
  \caption{
    An example $G$ of osp-graph (left),
    and an illustration of the decomposition
    method (Theorem~\ref{theorem:osp}) computing $\SUD{G}=216$ based on the values
    of $\SUDp{G}{\alpha,\beta}{s,t}$ for all $(s,t) \in S$,
    from the oriented base ss-graphs (leaf nodes) to $G$ (root of the decomposition tree).
  }
  \label{fig:sp}
\end{figure}

\begin{theorem}
  \label{theorem:spf}
  $G$ is an sp-graph if and only if $(G,\alpha,\beta)$ is
  obtained by series, parallel and free compositions from base ss-graphs, for
  some $\alpha,\beta$.
\end{theorem}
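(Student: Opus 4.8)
The plan is to prove the two implications separately, leaning on two facts that are immediate from the definitions: by Definition~\ref{def:ttsp} the ttsp-graphs are \emph{exactly} the oss-graphs obtained from base ss-graphs by series and parallel compositions (in particular series/parallel compositions of ttsp-graphs are again ttsp), and by Definition~\ref{def:sp} a multigraph is an sp-graph precisely when each of its blocks (a block being a maximal $2$-connected subgraph or a bridge) is a blind ttsp-graph. The conceptual heart of the statement is that the free composition is exactly the operation gluing ttsp-blocks along the block-cut tree, whereas series and parallel compositions act inside a block. To track this I would associate to every oss-graph $(G,\alpha,\beta)$ produced by the compositions its \emph{spine}: the subgraph formed by the blocks lying on the path between $\alpha$ and $\beta$ in the block-cut tree of $\un{G}$ (a single block when $\alpha,\beta$ already share one).

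For the implication ``obtained by compositions $\Rightarrow$ sp-graph'' I would argue by structural induction on the construction, maintaining the joint invariant: (i) every block of $G$ is a blind ttsp-graph, and (ii) the spine of $(G,\alpha,\beta)$, read with terminals $\alpha,\beta$, is a ttsp-graph. The base case is a single edge, whose unique block is its spine. For a series composition $\seqcomp{(G,\alpha,\beta)}{(G',\alpha',\beta')}=(D,\alpha,\beta')$ the join vertex $\beta=\alpha'$ is a cut vertex, so the blocks of $D$ are those of $G$ and $G'$ (giving (i)), while the spine of $D$ is the series composition of the two spines, hence ttsp (giving (ii)). The free composition $\freecomp{(G,\alpha,\beta)}{G'}$ is handled the same way: $G'$ is a subterm, so by the induction hypothesis its blocks are ttsp, and attaching it at a single vertex leaves the spine between $\alpha$ and $\beta$ untouched. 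The delicate case is the parallel composition, discussed below.

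For the converse ``sp-graph $\Rightarrow$ obtained by compositions'' I would induct on the number of blocks of $G$ using the block-cut tree of $\un G$. If $G$ is $2$-connected it is one blind ttsp-block, so $(G,\alpha,\beta)$ is ttsp for a suitable terminal pair and Definition~\ref{def:ttsp} finishes it. Otherwise fix a block $R$ and build $(R,\alpha_R,\beta_R)$ by series and parallel compositions (it is blind ttsp). Each subtree of the block-cut tree hanging from a cut vertex $c$ of $R$ spans an sp-subgraph with strictly fewer blocks; by the induction hypothesis it is obtained by compositions, and I graft it onto the current graph by a free composition at $c$, which is legitimate precisely because the free composition may glue at an arbitrary vertex rather than only at a terminal, and preserves the terminals $\alpha_R,\beta_R$. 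Iterating over all cut vertices of $R$ assembles $G$ as $(G,\alpha_R,\beta_R)$. This is exactly where the free composition is indispensable: series and parallel compositions alone cannot produce the branching of a block-cut tree, e.g.\ a claw.

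The main obstacle is the parallel-composition step of the first implication. When $\parcomp{(G,\alpha,\beta)}{(G',\alpha',\beta')}=(D,\alpha,\beta)$ identifies $\alpha$ with $\alpha'$ and $\beta$ with $\beta'$, the two spines acquire a common pair of endpoints and so create internally disjoint connections between $\alpha$ and $\beta$; I must verify that all blocks of the two spines then coalesce into a single block of $D$, that this block is exactly the parallel composition of the two spines (hence ttsp, restoring (ii)), and that every off-spine block, including subgraphs previously grafted by free compositions onto internal cut vertices of a spine, survives as a separate, unaffected block (restoring (i)). The reason I phrase invariant (ii) through a spine carried along the construction, instead of re-deriving terminals at the end, is precisely to avoid the fact that a $2$-connected ttsp-graph need not be ttsp with respect to an arbitrary pair of its vertices: the correct terminal pair is furnished inductively by the composition history.
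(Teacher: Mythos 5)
Your proposal is correct, and for one of the two implications it takes a genuinely different route from the paper. Your converse direction (sp-graph $\Rightarrow$ compositions) is in substance the paper's own ``only if'' argument: the paper likewise realizes each 2-connected component as a blind ttsp-graph and then merges these pieces by free compositions along cut vertices and tree edges; your packaging as an induction on the block-cut tree, with bridges treated as base ss-graphs, is the same idea organized recursively. The real divergence is in the forward direction (compositions $\Rightarrow$ sp-graph). The paper proves it via treewidth: it first shows free compositions can be pushed to the root of an sp-tree (free-on-top normal form), builds from such a tree a tree decomposition of width at most $2$, and then invokes the cited equivalence between treewidth-$2$ graphs, $K_4$-minor-free graphs, and sp-graphs. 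You instead run a direct structural induction maintaining the invariant that every block is blind ttsp and that the spine between the current terminals is ttsp \emph{with those terminals}. This is more elementary and self-contained: no normal form, no treewidth, no external characterization. It also confronts explicitly the one subtlety the treewidth route never has to face, namely that a blind ttsp-graph need not be ttsp with respect to an arbitrary vertex pair (a theta graph with terminals interior to two different branches acquires a $K_4$ minor once the terminal edge is added), so the terminal pair must be carried through the induction, which your spine does. The price is the block bookkeeping under parallel composition, which you correctly isolate as the crux; the verifications you list do go through: the two spines merge into a single 2-connected block equal to the parallel composition of the two spines, and every hanging subgraph still meets the rest of the graph only in its single attachment cut vertex, since identifying $\alpha$ with $\alpha'$ and $\beta$ with $\beta'$ adds no edges at hanging vertices. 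In short, the paper's route is shorter given the literature it can cite and ties the theorem to treewidth (which its conclusion then exploits); yours costs more routine case-checking but stands alone and produces the admissible terminal pair constructively.
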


\begin{proof}[Proof sketch (see Appendix~\ref{a:sp})]
  Free compositions allow to build all sp-graphs, because it offers the
  possibility to create the missing tree structures of ttsp-graphs:
  arbitrary 1-connected components
  linking 2-connected ttsp-graphs.
  Moreover free compositions do not go beyond sp-graphs,
  since the obtained multigraphs still have treewidth 2.
\end{proof}

\begin{theorem}
  \label{theorem:osp}
  {\bf \#UD} is solvable in time $\O(n^2 \log^2 n \log\log n)$ on osp-graphs (without promise).
\end{theorem}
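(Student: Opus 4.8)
The plan is to process a series--parallel decomposition of $\un{G}$ bottom-up, maintaining at every oss-subgraph the six integers $\SUDp{\cdot}{\alpha,\beta}{s,t}$ for $(s,t)\in S$, and to read off $\SUD{G}$ at the root via~\eqref{eq:partition}. By Theorem~\ref{theorem:spf}, when $\un{G}$ is an sp-graph we may realize $(G,\alpha,\beta)$, for a suitable choice of terminals, as a composition tree whose leaves are oriented base ss-graphs and whose internal nodes are series, parallel, or free compositions. The algorithm is then: (i) build this tree; (ii) initialise the leaves; (iii) evaluate the six counters bottom-up using Lemmas~\ref{lemma:seqcomp}, \ref{lemma:parcomp} and~\ref{lemma:freecomp}; (iv) output the sum of the six values at the root.

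First I would treat the base case. An oriented base ss-graph is a single arc between $\alpha$ and $\beta$, hence it has exactly two labelings. If the arc is $(\alpha,\beta)$, then labeling it $\labplus$ keeps a positive arc $\alpha\to\beta$ in $\revneg{G}_{\lab}$, while labeling it $\labminus$ reverses it into a negative arc $\beta\to\alpha$; thus the only nonzero counters are $\SUDp{G}{\alpha,\beta}{+,\varnothing}=1$ and $\SUDp{G}{\alpha,\beta}{\varnothing,-}=1$, and symmetrically for the arc $(\beta,\alpha)$ one gets $\SUDp{G}{\alpha,\beta}{\varnothing,+}=1$ and $\SUDp{G}{\alpha,\beta}{-,\varnothing}=1$. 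Each internal node is then handled by the matching lemma: series and parallel compositions update the six counters (Lemmas~\ref{lemma:seqcomp} and~\ref{lemma:parcomp}), whereas a free composition multiplies each counter of the oss-child by the scalar $\SUD{G'}$ of the freely attached digraph (Lemma~\ref{lemma:freecomp}), this scalar being itself obtained recursively as the sum over $S$ of the six counters of $G'$ at its own terminals. Correctness of the evaluation is exactly the content of Theorem~\ref{theorem:spf} together with the three lemmas, and the final answer is $\SUD{G}=\sum_{(s,t)\in S}\SUDp{G}{\alpha,\beta}{s,t}$ by~\eqref{eq:partition}.

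For the \emph{without promise} clause I would first run a linear-time series--parallel recognition and decomposition procedure on $\un{G}$~\cite{vtl79}: if it reports that $\un{G}$ is not an sp-graph the algorithm rejects, and otherwise it returns the composition tree used above. This tree has $\O(n)$ nodes, and at each node only a constant number of additions and multiplications of integers of $\O(n)$ bits are performed (the counters never exceed the total number $2^{\O(n)}$ of labelings). Bounding each multiplication by the Sch\"onhage--Strassen cost $\O(n\log n\log\log n)$~\cite{ss71} and summing the per-node costs over the decomposition tree yields the announced $\O(n^2\log^2 n\log\log n)$ bound; as in Remark~\ref{remark:mult}, faster integer multiplication would improve the logarithmic factors.

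The main obstacle lies in steps (i)--(ii): turning Theorem~\ref{theorem:spf} into an explicit composition tree whose leaves are single arcs, and reading off from the undirected decomposition both the correct terminals $\alpha,\beta$ and the correct composition type at each node, so that the hypotheses of the three lemmas are literally met. In particular one must orient each base edge consistently and decide, for every 2-connected component versus the tree edges joining them, whether a series, parallel, or free composition applies. Once this bookkeeping is in place the arithmetic and the complexity analysis are routine, the delicate combinatorial points having already been absorbed into Lemmas~\ref{lemma:seqcomp}--\ref{lemma:freecomp} and Theorem~\ref{theorem:spf}.
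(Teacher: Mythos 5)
Your proposal is correct and follows essentially the same route as the paper's proof: linear-time recognition and sp-tree construction of $\un{G}$ from~\cite{vtl79} (rejecting non-sp-graphs, hence no promise), the same leaf initialization for oriented base ss-graphs, bottom-up application of Lemmas~\ref{lemma:seqcomp}, \ref{lemma:parcomp} and~\ref{lemma:freecomp} at s-, p- and f-nodes, and conclusion via Formula~\ref{eq:partition}. Your accounting of $\O(n)$ tree nodes, each costing a constant number of Sch\"onhage--Strassen multiplications on $\O(n)$-bit integers, is if anything slightly tighter than the paper's stated $\O(n \log n)$ composition steps, and both stay within the announced $\O(n^2 \log^2 n \log\log n)$ bound.
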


\begin{proof}[Proof sketch (see Appendix~\ref{a:sp})]
  This is a direct consequence of Lemmas~\ref{lemma:seqcomp},
  \ref{lemma:parcomp} and~\ref{lemma:freecomp},
  because all values are in $\O(2^n)$
  (the number of $\{\labplus,\labminus\}$-labelings)
  hence on $\O(n)$ bits, the values of $\SUDp{G}{\alpha,\beta}{s,t}$
  are trivial for oriented base ss-graphs,
  and we perform $\O(n \log n)$ compositions (to reach Formula~\ref{eq:partition}).
  The absence of promise comes from a linear time recognition algorithm
  in~\cite{vtl79} for ttsp-graphs, which also provides the decomposition structure.
\end{proof}

Again, Remark~\ref{remark:mult} applies to Theorem~\ref{theorem:osp}. 

\section{Conclusion}

Our main result is the $\SPoly$-completeness of {\bf \#UD}, {\em i.e.} of
counting the number of non-equivalent block-sequential update schedules of a
given AN $f$. We proved that this count can nevertheless be done in $\O(n^2
\log n \log\log n)$ time for oriented cacti, and in $\O(n^2 \log^2 n \log\log n)$
time for oriented series-parallel graphs. This last result has been obtained
via a decomposition method providing a divide-and-conquer algorithm.

Remark that cliques or tournaments are intuitively difficult instances of {\bf \#UD},
because of the intertwined structure of potential forbidden cycles. It turns
out that $K_4$ is the smallest clique that cannot be build with series,
parallel and free decompositions, and that series-parallel graphs
(Definition~\ref{def:sp}) correspond exactly to the family of $K_4$-minor-free
graphs~\cite{d65} (it is indeed closed by
minor~\cite{rs04}).
In further works we would like to extend this caracterization and
the decomposition method to (di)graphs with multiple endpoints.

The complexity analysis of the algorithms presented in Theorems~\ref{theorem:cactus}
and~\ref{theorem:osp} may be improved, and
adapted to the parallel setting using the algorithms presented
in~\cite{bf96,j92}.
One may also ask for which other classes of digraphs is $\SUD{G}$ computable
efficiently (in polynomial time)? Since we found such an algorithm for graphs
of treewidth 2, could it be that the problem is fixed parameter tractable on
bounded treewidth digraphs? Rephrased more directly, could a general tree
decomposition (which, according to the proof of Theorem~\ref{theorem:spf}, is
closely related to the series-parallel decomposition for treewidth 2) be
exploited to compute the solution to {\bf \#UD}? Alternatively, what other
types of decompositions one can consider in order to ease the computation of
$\SUD{G}$?

Finally, from the multiplication obtained for one-point join of two graphs
(Lemma~\ref{lemma:freecomp} on free composition), we may ask whether $\SUD{G}$
is an evaluation of the Tutte polynomial? From its universality~\cite{b72}, it
remains to know whether there is a deletion-contradiction reduction.
However defining a Tutte polynomial for directed graphs is still an active area of
research~\cite{ab20,c18,pvp16,y19}.

\paragraph{Acknowledgments} This work was mainly funded by our salaries as French 
agents (affiliated to
Laboratoire Cogitamus (CN),
Aix-Marseille Univ, Univ. de Toulon, CNRS, LIS, UMR 7020, Marseille, France
(KP, SS and LV),
Univ. C\^{o}te d'Azur, CNRS, I3S, UMR 7271, Sophia Antipolis, France
(KP),
and
{\'E}cole normale sup{\'e}rieure de Lyon, Computer Science department, Lyon, France (LV)),
and secondarily by the 
projects ANR-18-CE40-0002 FANs, ECOS-CONICYT C16E01, STIC AmSud 19-STIC-03 (Campus 
France 43478PD).

\bibliographystyle{plain}
\bibliography{biblio}

\begin{thebibliography}{10}

\bibitem{a61}
A.~{\'A}d{\'a}m.
\newblock {On graphs in which two vertices are distinguished}.
\newblock {\em Acta Mathematica Academiae Scientiarum Hungaricae}, 12:377--397,
  1964.

\bibitem{adfm13a}
J.~Aracena, J.~Demongeot, {\'E}.~Fanchon, and M.~Montalva.
\newblock {On the number of different dynamics in Boolean networks with
  deterministic update schedules}.
\newblock {\em Mathematical Biosciences}, 242:188--194, 2013.

\bibitem{adfm13b}
J.~Aracena, J.~Demongeot, {\'E}.~Fanchon, and M.~Montalva.
\newblock {On the number of update digraphs and its relation with the feedback
  arc sets and tournaments}.
\newblock {\em Discrete Applied Mathematics}, 161:1345--1355, 2013.

\bibitem{afmn11}
J.~Aracena, {\'E}.~Fanchon, M.~Montalva, and M.~Noual.
\newblock {Combinatorics on update digraphs in Boolean networks}.
\newblock {\em Discrete Applied Mathematics}, 159:401--409, 2011.

\bibitem{agms09}
J.~Aracena, E.~Goles, A.~Moreira, and L.~Salinas.
\newblock {On the robustness of update schedules in Boolean networks}.
\newblock {\em Biosystems}, 97:1--8, 2009.

\bibitem{ags13}
J.~Aracena, L.~G\'omez, and L.~Salinas.
\newblock {Limit cycles and update digraphs in Boolean networks}.
\newblock {\em Discrete Applied Mathematics}, 161:1--12, 2013.

\bibitem{ab20}
J.~Awan and O.~Bernardi.
\newblock {Tutte polynomials for directed graphs}.
\newblock {\em Journal of Combinatorial Theory, Series B}, 140:192--247, 2020.

\bibitem{b98}
H.~L. Bodlaender.
\newblock A partial k-arboretum of graphs with bounded treewidth.
\newblock {\em Theoretical Computer Science}, 209(1):1--45, 1998.

\bibitem{bf96}
H.~L. Bodlaender and B.~de~Fluiter.
\newblock {Parallel algorithms for series parallel graphs}.
\newblock In {\em Proceedings of ESA'96}, volume 1136 of {\em LNCS}, pages
  277--289, 1996.

\bibitem{bgps20}
F.~Bridoux, C.~Gaze-Maillot, K.~Perrot, and S.~Sen{\'e}.
\newblock {Complexity of limit-cycle problems in Boolean networks}.
\newblock arXiv:2001.07391, 2020.

\bibitem{b72}
T.~H. Brylawski.
\newblock {A decomposition for combinatorial geometries}.
\newblock {\em Transactions of the American Mathematical Society},
  171:235--282, 1972.

\bibitem{bk98}
R.~E. Burkard and J.~Krarup.
\newblock {A linear algorithm for the pos/neg-weighted 1-median problem on a
  cactus}.
\newblock {\em Computing}, 60:193--215, 1998.

\bibitem{c18}
S.~H. Chan.
\newblock {Abelian sandpile model and Biggs-Merino polynomial for directed
  graphs}.
\newblock {\em Journal of Combinatorial Theory, Series A}, 154:145--171, 2018.

\bibitem{f97}
B.~de~Fluiter.
\newblock {\em {Algorithms for graphs of small treewidth}}.
\newblock PhD thesis, University of Utrecht, 1997.

\bibitem{d65}
R.~J. Duffin.
\newblock {Topology of series-parallel networks}.
\newblock {\em Journal of Mathematical Analysis and Applications}, 10:303--318,
  1965.

\bibitem{e59}
B.~Elspas.
\newblock {The theory of autonomous linear sequential networks}.
\newblock {\em IRE Trans. Circuit Theory}, 6:45--60, 1959.

\bibitem{f14}
N.~Fat{\`e}s.
\newblock A guided tour of asynchronous cellular automata.
\newblock {\em Journal of Cellular Automata}, 9:387--416, 2014.

\bibitem{f07}
M.~F\"{u}rer.
\newblock {Faster Integer Multiplication}.
\newblock In {\em Proceedings of STOC'07}, pages 57--66, 2007.

\bibitem{gmmrw19}
E.~Goles, D.~Maldonado, P.~Montealegre, and M.~R{\'i}os-Wilson.
\newblock {On the complexity of asynchronous freezing cellular automata}.
\newblock arXiv:1910.10882, 2019.

\bibitem{hvdh19}
D.~Harvey and J.~Van Der~Hoeven.
\newblock {Integer multiplication in time O(n log n)}.
\newblock HAL:hal-02070778, 2019.

\bibitem{ht73}
J.~Hopcroft and R.~Tarjan.
\newblock Algorithm 447: efficient algorithms for graph manipulation.
\newblock {\em Communications of the ACM}, 16:372--378, 1973.

\bibitem{h59}
D.~A. Huffman.
\newblock {Canonical forms for information-lossless finite-state logical
  machines}.
\newblock {\em IRE Trans. Inform. Theory}, 5:41--59, 1959.

\bibitem{j92}
J.~J{\'a}j{\'a}.
\newblock {\em {An introduction to parallel algorithms}}.
\newblock Addison-Wesley, 1992.

\bibitem{k69}
S.~A. Kauffman.
\newblock {Homeostasis and differentiation in random genetic control networks}.
\newblock {\em Nature}, 224:177--178, 1969.

\bibitem{k51}
S.~C. Kleene.
\newblock {Representation of events in nerve nets and finite automata}.
\newblock Project RAND RM--704, US Air Force, 1951.

\bibitem{l86}
N.~Linial.
\newblock {Hard enumeration problems in geometry and combinatorics}.
\newblock {\em SIAM Journal on Algebraic Discrete Methods}, 7:331--335, 1986.

\bibitem{mm92}
P.~A. MacMahon.
\newblock {The combinations of resistances}.
\newblock {\em The Electrician}, 28:601--602, 1892.

\bibitem{mcp43}
W.~S. McCulloch and W.~Pitts.
\newblock {A logical calculus of the ideas immanent in nervous activity}.
\newblock {\em Journal of Mathematical Biophysics}, 5:115--133, 1943.

\bibitem{ns18}
M.~Noual and S.~Sen{\'e}.
\newblock {Synchronism versus asynchronism in monotonic Boolean automata
  networks}.
\newblock {\em Natural Computing}, 17:393--402, 2018.

\bibitem{p19}
K.~Perrot.
\newblock {On the complexity of counting feedback arc sets}.
\newblock arXiv:1909.03339, 2019.

\bibitem{pmmor19}
K.~Perrot, M.~Montalva-Medel, P.~P.~B. de~Oliveira, and E.~L.~P. Ruivo.
\newblock {Maximum sensitivity to update schedule of elementary cellular
  automata over periodic configurations}.
\newblock {\em Natural Computing}, 19:51--90, 2020.

\bibitem{pvp16}
K.~Perrot and V.~T. Pham.
\newblock {Chip-firing game and partial Tutte polynomial for Eulerian
  digraphs}.
\newblock {\em Electronic Journal of Combinatorics}, 23(1), 2016.

\bibitem{rs42}
J.~Riordan and C.~E. Shannon.
\newblock {The number of two-terminal series-parallel networks}.
\newblock {\em Journal of Mathematics and Physics}, 21:83--93, 1942.

\bibitem{r69}
F.~Robert.
\newblock {Blocs-H-matrices et convergence des m{\'e}thodes it{\'e}ratives
  classiques par blocs}.
\newblock {\em Linear Algebra and its Applications}, 2:223--265, 1969.

\bibitem{rs04}
N.~Robertson and P.~D. Seymour.
\newblock {Graph Minors. XX. Wagner's conjecture}.
\newblock {\em Journal of Combinatorial Theory, Series B}, 92:325--357, 2004.

\bibitem{ss71}
A.~Sch{\"o}nhage and V.~Strassen.
\newblock {Schnelle Multiplikation gro{\ss}er Zahlen}.
\newblock {\em Computing}, 7:281--292, 1971.

\bibitem{ss02}
B.~Schwikowski and E.~Speckenmeyer.
\newblock {On enumerating all minimal solutions of feedback problems}.
\newblock {\em Discrete Applied Mathematics}, 117:253--265, 2002.

\bibitem{t73}
R.~Thomas.
\newblock {Boolean formalization of genetic control circuits}.
\newblock {\em Journal of Theoretical Biology}, 42:563--585, 1973.

\bibitem{vtl79}
J.~Valdes, R.~E. Tarjan, and E.~L. Lawler.
\newblock {The recognition of series parallel digraphs}.
\newblock In {\em Proceedings of STOC'79}, pages 1--12, 1979.

\bibitem{v78}
J.~Valdes~Ayesta.
\newblock {\em {Parsing Flowcharts and Series-Parallel Graphs}}.
\newblock PhD thesis, Stanford University, 1978.

\bibitem{wc83}
J.~A. Wald and C.~J. Colbourn.
\newblock {Steiner trees, partial 2-trees, and minimum IFI networks}.
\newblock {\em Networks}, 13:159--167, 1983.

\bibitem{y19}
K.~S. Yow.
\newblock {\em {Tutte-Whitney polynomials for directed graphs and maps}}.
\newblock PhD thesis, Monash University, 2019.

\end{thebibliography}

\appendix

\section{Full proofs for the application of the decomposition method to oriented series-parallel graphs}
\label{a:sp}

The concept of decomposition tree associated to ttsp-graphs,
and generalized to sp-graphs,
will be useful to formalize the reasonings leading to
Theorems~\ref{theorem:spf} and~\ref{theorem:osp}.

The structure of a ttsp-graph $(G=(V,E),\alpha,\beta)$ obtained from base
ss-graphs by series and parallel compositions is expressed in a {\em ttsp-tree}
$T_G=(N,F,\sigma:N \to V \times V)$. It is a rooted binary tree, in which each
node $\eta \in N$ has one of the types {\em s-node}, {\em p-node}, {\em
leaf-node}, and a label $\sigma(\eta)$ consisting in a pair of vertices from $G$.
Every node $\eta \in N$ corresponds to a unique ttsp-graph $(G',\alpha',\beta')$
which is a subgraph of $G$, with $\sigma(\eta)=(\alpha',\beta')$. The leaves of
the tree are of type leaf-node and correspond to base ss-graphs, they are in
one-to-one correspondence with the edges of $G$.
The ss-graph associated to an s-node is the series composition of its (ordered)
children, and the ss-graph associated to a p-node is the parallel composition of
its (unordered) children.
It is worth noticing that each ttsp-graph corresponds to at least one ttsp-tree
(and possibly to many ttsp-trees, even non isomorphic ones~\cite[Section 2.2]{vtl79}),
and that each ttsp-tree corresponds to a unique ttsp-graph.

To this classical definition (presented for example in~\cite{vtl79,f97}), we add
{\em f-nodes} corresponding to free compositions, leading to {\em sp-trees}, as follows.
The subtlety is that the two distinguished vertices of one children of a free
composition are discarded. Let $(G=(V,E),\alpha,\beta)$ and
$(G'=(V',E'),\alpha',\beta')$ be the (ordered) children of an f-node $\eta$ of label
$\sigma(\eta)=(v,v')$ with $v \in V$ and $v' \in V'$, the ss-graph associated to
this f-node is the free composition
$\freecomp{(G,\alpha,\beta)}{G'}$ (its label does not correspond to the
distinguished vertices of the ss-graph).
Every node $\eta \in N$ still corresponds to a unique ttsp-graph
$(G',\alpha',\beta')$ which is a subgraph of $G$, with
$\sigma(\eta)=(\alpha',\beta')$ for all but f-nodes; and for f-nodes the
distinguished vertices are that of its first (left on our figures) child.
We have the same correspondence between sp-graphs and sp-trees.
An example is given on Figure~\ref{fig:sptree}.

\begin{figure}
  \centerline{
    \includegraphics{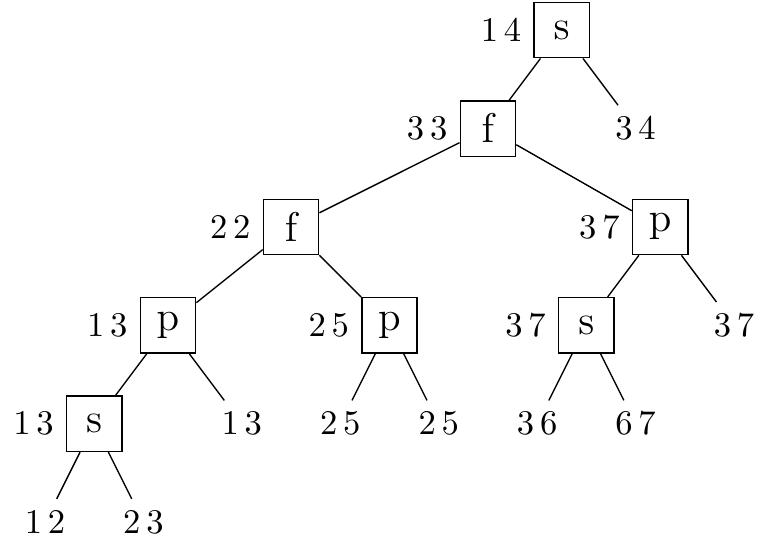}
    \hspace*{1cm}
    \raisebox{1.5cm}{\includegraphics{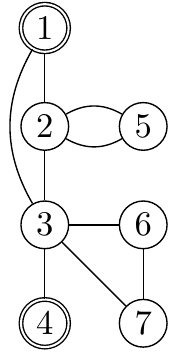}}
  }
  \caption{
    Example sp-tree (left) and the corresponding sp-graph (right).
  }
  \label{fig:sptree}
\end{figure}

Now remark that all free
compositions may be performed at the end of the composition process, {\em i.e.}
on top of the sp-tree. Indeed, free compositions can be inductively pushed
towards the root of an sp-tree using the operations presented on
Figure~\ref{fig:pushfree}, which leave the sp-graph unchanged. When all free
compositions are on top of an sp-tree, we say that it is in {\em
free-on-top normal form}. Thus to any sp-graph corresponds at least one sp-tree
in free-on-top normal form.

\begin{figure}
  \centerline{
    \raisebox{1.5cm}{\includegraphics{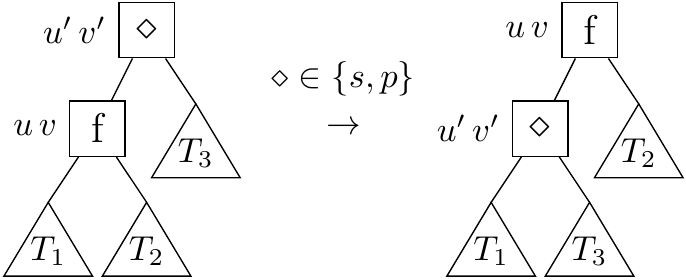}}
    \hspace*{0cm}
    \includegraphics{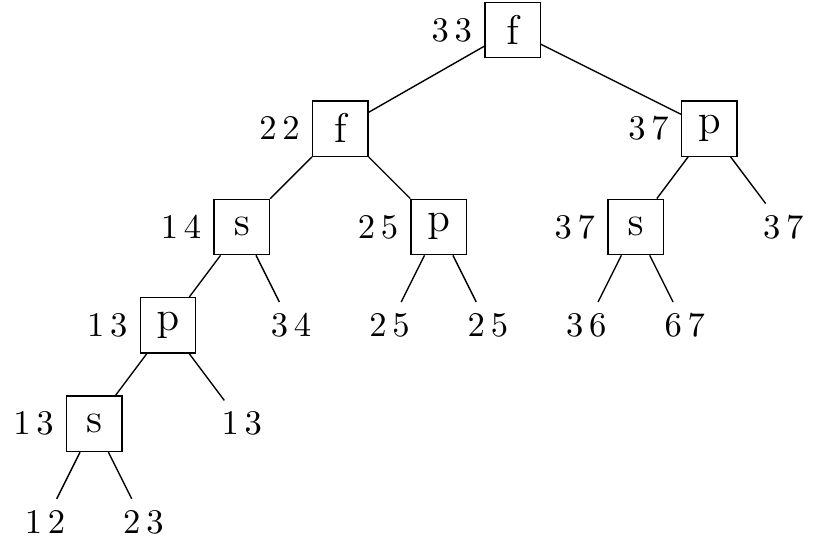}
  }
  \caption{
    Two operations on sp-trees, pushing free compositions towards the root (left).
    The resulting sp-tree corresponds to the same sp-graph, leading by
    induction to an sp-tree in free-on-top normal form.
    Sp-tree in free-on-top normal form (right)
    corresponding to the sp-graph of Figure~\ref{fig:sptree}.
  }
  \label{fig:pushfree}
\end{figure}

\begin{proof}[Proof of Theorem~\ref{theorem:spf}]
  For the ``only if'' part, let $G$ be an sp-graph, with $G_1,\dots,G_x$ its
  2-connected components, $H_1,\dots,H_y$ its remaining connected components
  (which are therefore trees), and $w_1,\dots,w_z$ the set of vertices belonging
  to two 2-connected components among $G_1,\dots,G_x$.
  By definition $G_1,\dots,G_x$ are blind ttsp-graphs, to which we can
  respectively associate some ttsp-trees $T_{G_1},\dots,T_{G_x}$, which are
  particular cases of sp-trees.
  Now, take an edge $\{u,v\}$ of $H_i$ sharing:
  \begin{itemize}
    \item only one vertex with some $T_{G_j}$, and consider the free
      composition of $T_{G_j}$ with the base ss-graph corresponding to edge
      $\{u,v\}$ (joining them at that vertex), we get an sp-tree replacing
      $T_{G_j}$,
    \item its two vertices with respectively some $T_{G_j}$ and $T_{G_{j'}}$,
      and consider first the free composition of $T_{G_j}$ with the base
      ss-graph corresponding to edge $\{u,v\}$ (joining them at $u$), this
      gives $T'_{G_j}$, and second the free composition of $T'_{G_j}$ with
      $T_{G_{j'}}$ (joining them at $v$), we get an sp-tree replacing both
      $T_{G_j}$ and $T_{G_{j'}}$,
  \end{itemize}
  or take a vertex $w_i$ and:
  \begin{itemize}
    \item consider the free composition of the two sp-trees to which $w_i$ belongs
      (joining them at $w_i$), we get a new sp-tree replacing them.
  \end{itemize}
  Repeating this process for all edges of $H_1,\dots,H_y$ and all vertices
  $w_1,\dots,w_z$, we inductively build
  an sp-tree corresponding to $(G,\alpha,\beta)$, for some $\alpha,\beta$.
  Indeed, since $G$ is connected and $H_1,\dots,H_y$ are only 1-connected
  components, all edges of $H_1,\dots,H_y$ will inductively be added to the
  sp-trees, and from the maximality of 2-connected components $G_1,\dots,G_x$
  the free compositions given by $w_i$ always joins two sp-trees that are distinct.
  Consequently this process will eventually merge all ttsp-trees
  into a unique sp-tree corresponding to
  the whole ss-graph $(G,\alpha,\beta)$ (at this point $\alpha,\beta$ could be
  the distinguished vertices of any ttsp-tree among $T_{G_1},\dots,T_{G_x}$,
  and the resulting sp-tree is in free-on-top normal form).

  \medskip

  For the ``if'' part, let $(G,\alpha,\beta)$ be obtained by series, parallel
  and free compositions from base ss-graphs, with $T_G$ a corresponding sp-tree
  in free-on-top normal form. Thanks to the free-on-top normal form, we can adapt
  the tree decomposition presented in~\cite[Lemma 2.3.5]{f97} in order to prove
  that $G$ has treewidth at most 2.
  Let $T^1_G=(N^1,F^1,\sigma^1),\dots,T^x_G=(N^x,F^x,\sigma^x)$ denote the forest
  of ttsp-trees excluding free compositions (but including all base ss-graphs).
  We construct the tree decompositions
  $(X^1,T^1_G),\dots,(X^x,T^x_G)$ for the ttsp-graphs respectively corresponding to
  $T^1_G,\dots,T^x_G$, as follows: $X^i=\{X^i_\eta \mid \eta \in N^i\}$ and
  \begin{itemize}
    \item for each p-node $\eta \in N^i$ of label $(\alpha,\beta)$,
      set $X^i_\eta = \{\alpha,\beta\}$,
    \item for each s-node $\eta \in N^i$ of label $(\alpha,\beta)$ and labels of
      its two children $(\alpha,\gamma)$ and $(\gamma,\beta)$,
      set $X^i_\eta = \{\alpha,\beta,\gamma\}$.
  \end{itemize}
  The tree structure of each tree decomposition is directly given by the ttsp-tree.
  It remains to assemble these tree decompositions according to the free
  compositions, in order to obtain a tree decomposition of $G$. 
  For each f-node of label $(v,v')$, {\em i.e.} identifying vertices $v,v'$
  of its children, one can simply add an edge merging two tree decompositions
  into a single tree decomposition, between any bag containing $v$ and
  any bag containing $v'$ (and rename $v'$ as $v$).
  The result $(X,T)$ of this process is indeed a tree decomposition of $G$:
  \begin{itemize}
    \item the obtained graph $T$ is a tree, as at each step
      two distinct trees are merged,
    \item all base ss-graphs (which are leaf-nodes of $T_G$) belong to
      ttsp-graphs, hence all edges of $G$ are covered in some bag of the tree
      decompositions $(X^1,T^1_G),\dots,(X^x,T^x_G)$,
    \item the subtree associated to each vertex is connected,
      because it was connected in tree decompositions
      $(X^1,T^1_G),\dots,(X^x,T^x_G)$, and the merge of two tree decompositions
      connects identified vertices.
  \end{itemize}
  Since $(X,T)$ has bags of size at most 3,
  it follows that the treewidth of $G$ is at most 2.
  From~\cite{b98,d65,wc83}, the family of graphs of treewidth at most 2
  equals the family of sp-graphs
  (via the equality with {\em $K_4$-minor-free} and {\em partial 2-trees}),
  thus $G$ is an sp-graph.
\end{proof}

\begin{proof}[Proof of Theorem~\ref{theorem:osp}]
  Given a directed graph $G$, one can identify in linear time the
  2-connected components of $\un{G}$ (from \cite{ht73}),
  and check in linear time that each of them is a blind ttsp-graph
  (from~\cite[Section~2.3 and Section~3.3 for implementation details]{vtl79}
  which is based on the characterization as series-parallel
  {\em reducible} multigraphs from~\cite{d65} and its Church-Rosser property,
  with the blind undirected graph oriented as in~\cite[Chapter 3.5]{v78}).
  This last algorithm also builds a ttsp-tree for each 2-connected component.
  In order to get an sp-tree for the whole graph, it remains to include the
  free compositions as in the ``if'' part of the proof of
  Theorem~\ref{theorem:spf}, which is also done in linear time.  We are now
  equipped with an sp-tree (in free-on-top normal form, which is not a
  necessary feature) corresponding to $\un{G}$ (in the case $G$ is an
  osp-graph, otherwise we reject the instance).

  \medskip

  The second (and main) part of the algorithm is a direct application of
  Lemmas~\ref{lemma:seqcomp}, \ref{lemma:parcomp} and~\ref{lemma:freecomp}.
  For a base ss-graph $H$ with an edge oriented as $(u,v)$ we have
  $$
    \begin{array}{c}
      \SUDp{H}{u,v}{+,+}=
      \SUDp{H}{u,v}{-,\varnothing}=
      \SUDp{H}{u,v}{\varnothing,+}=
      \SUDp{H}{u,v}{\varnothing,\varnothing}=
      0\\[.2em]
      \text{and}\quad
      \SUDp{H}{u,v}{+,\varnothing}=
      \SUDp{H}{u,v}{\varnothing,-}=
      1,
    \end{array}
  $$
  then by induction on the structure of the sp-tree of $\un{G}$
  (which provides two distinguished vertices for the osp-graph corresponding
  to each node of the sp-tree) we apply
  Lemma~\ref{lemma:seqcomp} for s-nodes,
  Lemma~\ref{lemma:parcomp} for p-nodes, and
  Lemma~\ref{lemma:freecomp} for f-nodes.
  There are $\O(n \log n)$ steps with $n$ the size of the input graph,
  because the sp-tree has one leaf-node for each arc of $G$ and is a binary tree.
  Finally, all values counting some number of labelings are in $\O(2^n)$
  (the number of $\{\labplus,\labminus\}$-labelings of the whole input
  digraph $G$), hence on $\O(n)$ bits, and the result follows.
\end{proof}

\end{document}